\def\BibTeX{{\rm B\kern-.05em{\sc i\kern-.025em b}\kern-.08em
    T\kern-.1667em\lower.7ex\hbox{E}\kern-.125emX}}
\newcommand{\mcA}{\mathcal{A}}
\newcommand{\mcB}{\mathcal{B}}
\newcommand{\subscr}[2]{#1_{\textup{#2}}}
\newcommand{\osL}{\operatorname{osLip}}
\newcommand{\Lip}{\operatorname{Lip}}
\newcommand{\eps}{\epsilon}
\newcommand{\R}{\mathbb{R}}
\newcommand{\norm}[1]{\left\| #1 \right\|}
\newcommand{\lognorm}[1]{\mu(#1)}
\newcommand{\mcX}{\mathcal{X}}
\newcommand{\mcY}{\mathcal{Y}}
\crefname{remark}{Remark}{Remarks}
\crefname{problem}{Problem}{Problems}
\crefname{assumption}{Assumption}{Assumptions}
\newcommand*{\QEDB}{\hfill\ensuremath{\square}}%  empty square
\DeclareMathAlphabet{\mymathbb}{U}{BOONDOX-ds}{m}{n}
\newcommand{\zero}{\mymathbb{0}} 
\definecolor{gnblue6}{RGB}{35,156,255} % 239cff
\begin{document}

\title{Online Feedback Optimization and Singular Perturbation via
  Contraction Theory\thanks{This work was supported in part by NSF award
    1941896, NSF GRFP award DGE 2040434, and AFOSR grant FA9550-22-1-0059.
    The Authors would like to thank Dr.\ Giulia De Pasquale for the
    insightful discussions. Additionally, we acknowledge the use of a large language model
    for assistance in improving the wording and grammar of this document.}}

  % Singular Perturbation via Contraction Theory %Analysis of Singularly Perturbed Systems Through the Lens of Contraction Theory
  
\author{Liliaokeawawa Cothren\thanks{Department of Electrical, Computer,
    and Energy Engineering, University of Colorado Boulder, Boulder, CO,
    USA (\email{lico4769@colorado.edu}).} \and Francesco
  Bullo\thanks{Center for Control, Dynamical Systems, and Computation, UC
    Santa Barbara, Santa Barbara, CA 93101 USA (\email{bullo@ucsb.edu}).}
  \and Emiliano Dall'Anese\thanks{Department of Electrical and Computer Engineering, Boston University, Boston, MA 02215 USA (\email{edallane@bu.edu}). }
  }

\maketitle

% \todoinb{LC June 03, 2024: This document has been edited for typos and checked for technical mistakes. Theorems regarding the overall contractivity of the interconnected system (old Theorems 2, 4, 6), associated assumptions (Assumption 8, 14), and remarks about these results are commented out. The sub-cases for Theorem 1 are commented out. Without further editing, this leaves 10 pages total. Next, LC will reorganize the document so that we start with the OFO motivating example and then state and prove Theorem 1. BIG TO DO: reorganize the introduction to reflect these changes.}

% \todoinb{LC June 10, 2024: Updated to SIAM template (thanks FB!). Edited equation formatting for single column format (see red equations for equations that LC is unsure of how to best format still). TO DO: add (in order of preference until 22-23 pages reached) LTI section, autonomous section, Stackelberg game. Very likely to leave out the local contractivity theorem still. }

% \todoinb{15 pages: OFO motivating example, general case, proof for this, prove the OFO case... 
% New outline: OFO motivating example, general case (w sub-section/corolary for autonomous dynamics), proof for general setting (noting that just cross out zeros for autonomous case), prove the OFO case, THEN LTI setting w appendices added back in.}

\begin{abstract}
  In this paper, we provide a novel contraction-theoretic approach to
  analyze two-time scale systems, including those commonly encountered in
  Online Feedback Optimization (OFO).  Our framework endows these systems
  with several robustness properties, enabling a more comprehensive
  characterization of their behaviors.  The primary assumptions are the
  contractivity of the fast sub-system and the reduced model, along with an
  explicit upper bound on the time-scale parameter.
  For two-time scale systems subject to disturbances, we show that the
  distance between solutions of the nominal system and solutions of its
  reduced model is uniformly upper bounded by a function of contraction
  rates, Lipschitz constants, the time-scale parameter, and the variability
  of the disturbances over time.
  Applying these general results to the OFO context, we establish new
  individual tracking error bounds, showing that solutions converge to
  their time-varying optimizer, provided the plant and steady-state
  feedback controller exhibit contractivity and the controller gain is
  suitably bounded.
  Finally, we explore two special cases: for autonomous nonlinear
  systems, we derive sharper bounds than those in the general results, and
  for linear time-invariant systems, we present novel bounds based on induced matrix 
  norms and induced matrix log norms.  
\end{abstract}

\section{Introduction}
\label{sec:introduction}

\textit{Problem description and motivation}.

Singular perturbation theory is a well-established framework to analyze
multi-time scale dynamical systems.  Originating from the early works of
Tikhonov~\cite{AT:1948}, Levinson~\cite{levinson1947perturbations}, and
Vasileva~\cite{vasil1963asymptotic}, its introduction to control
engineering is primarily credited to the seminal contributions of
Kokotovi{'c} and Sannuti in the late
1960s~\cite{PK-PS:68,sannuti1969near}. Since then, singular perturbation
has become a widely used tool for the modeling, analysis, and design of
control systems~\cite{kokotovic1976singular}.
Applications of singular perturbation theory span various fields, including
power systems~\cite{chow1990singular}, power
electronics~\cite{pahlevaninezhad2013self}, robotics~\cite{JK-EAC:17},
chemical reaction networks~\cite{EDS:07}, and biomolecular
networks~\cite{DDC:13}. This theory provides powerful methods to compare
the solutions and stability properties of a two-time scale system with its
\textit{limiting system}, which consists of the \textit{reduced model} and
the \textit{boundary layer system}~\cite{PVK-HKK-JOR:99}. The reduced model
simplifies the slow dynamics by using the quasi-steady-state of the fast
sub-system, while the boundary layer system accounts for the transient
behavior of the fast variables.

% Many two-time scale example systems satisfy contractivity properties that are stronger than the usual stability assumptions in singular perturbation analysis, 
% so alternative results can be achieved that clarify their advantageous properties due to contractivity.
% \red{[can chop the first sentence.]}
% so more powerful guarantees are achievable than the current results obtained via Lyapunov theory only. 
% For example, w

% Paragraph on online feedback optimization
Online Feedback Optimization (OFO) is a powerful control framework to
design feedback controllers that guide a system (referred to as the
\textit{plant}) toward its optimal steady states. In essence, the
controller selects the optimal steady state from all possible options, as
defined by an associated optimization problem. The design of these
optimization-based control systems is heavily influenced by factors such as
the (non)convexity of cost functions, types of constraints, the
(non)linearity and stability of the system, and the suitability of
continuous-time, discrete-time, or hybrid
interconnections~\cite{MC-ED-AB:20,AH-SB-GH-FD:21,LSPL-JWSP-EM:21}.
For certain classes of unconstrained optimization problems, a common and
effective design choice is the gradient-based feedback
controller~\cite{SM-AH-SB-GH-FD:18,MC-ED-AB:20}, which uses the gradient of
convex cost functions along with a proportional constant, or \textit{gain},
that needs to be tuned. Traditional open-loop optimization methods require
the evaluation of the steady state values of the plant. However, if the
gain is sufficiently small to induce a time-scale separation between the
plant and the controller, the gradient-based OFO controller can operate
using direct measurements of the plant instead of its steady states,
thereby making the algorithm \textit{online}. Thus, the interconnected
system arising in the OFO framework serves as a prime example of a general
two-time scale system we aim to consider.

% paragraph on contraction theory
% In applications such as
Aside from optimization-based control~\cite{MC-ED-AB:20,AH-SB-GH-FD:21},
applications such as learning and control using neural
networks~\cite{HT-SJC-JJES:21,YF-TGK:96,MR-RW-IRM:20,LK-ME-JJES:22},
interaction networks~\cite{MAAR-DA-EDS:23}, hierarchical
games~\cite{LJR-SAB-SSS:16,fiez2019convergence}, and systems in Lur'e
form~\cite{MG-VA-ST-DA:23}, the two-time scale dynamical system of interest
satisfies a stronger contractivity property~\cite{WL-JJES:98,FB:24-CTDS},
instead of just the existence of a Lyapunov function.  Moreover, when
exogenous inputs and parameters are present, the system of interest often
satisfies Lipschitz conditions with respect to these inputs and parameters.
This is especially important for the OFO case, for which the vast
  majority of analyses use Lyapunov functions to guarantee
  input-to-state-stability (ISS) of the interconnected system. By using
  contractivity arguments instead of Lyapunov-based methods, we will show
  individual asymptotic bounds of the plant and the controller,
  respectively, instead of one bound on the concatenated state. 
Contraction rates and Lipschitz constants directly characterize the robust
stability of dynamical systems by, for example, serving as key quantities
for the calculation of practical stability regions, convergence times of
neighboring solutions, and tracking errors of equilibrium curves.
Additionally, contracting dynamical systems enjoy critical robustness
properties, including robustness to disturbances, incremental
input-to-state stability (ISS), and finite input-state
gain~\cite{ZA-EDS:14b,HT-SJC-JJES:21,JJ-TIF:10}.  Motivated by the
contractive properties enjoyed by many two-time scale systems, we are
interested in analyzing two-time scale systems via contraction theory to
explicitly characterize their strong stability and robustness properties.

Beyond optimization-based control~\cite{MC-ED-AB:20,AH-SB-GH-FD:21},
applications such as learning and control using neural
networks~\cite{HT-SJC-JJES:21,YF-TGK:96,MR-RW-IRM:20,LK-ME-JJES:22},
interaction networks~\cite{MAAR-DA-EDS:23}, hierarchical
games~\cite{LJR-SAB-SSS:16,fiez2019convergence}, and systems in Lur’e
form~\cite{MG-VA-ST-DA:23} often involve two-time scale dynamical systems
and often exhibit strong contractivity
properties~\cite{WL-JJES:98,FB:24-CTDS} beyond just the existence of a
Lyapunov function. Furthermore, when exogenous inputs and parameters are
present, these systems frequently satisfy Lipschitz conditions with respect
to these inputs and parameters.
This is particularly crucial for the OFO case, where most analyses utilize
Lyapunov functions to ensure input-to-state stability (ISS) of the
interconnected system. By employing contractivity arguments instead of
Lyapunov-based methods, we can demonstrate individual asymptotic bounds for
the plant and the controller, rather than a single bound on the
concatenated state.

Contraction rates and Lipschitz constants are fundamental in characterizing
the robust stability of dynamical systems, serving as key quantities for
calculating practical stability regions, convergence times of neighboring
solutions, and tracking errors of equilibrium curves. Contracting dynamical
systems also benefit from essential robustness properties, such as
robustness to disturbances, incremental input-to-state stability (iISS),
and finite input-state
gain~\cite{ZA-EDS:14b,HT-SJC-JJES:21,JJ-TIF:10,FB:24-CTDS}.  Motivated by
the contractive properties inherent in many two-time scale systems, our
interest lies in analyzing these systems through contraction theory to
explicitly characterize their strong stability and robustness properties.

\textit{Literature review}.

Within control engineering, results on the closeness of solutions of a
singularly perturbed system to its limiting system and overall stability
guarantees typically leverage Lyapunov theory. Lyapunov-based analysis
involves constructing an appropriate Lyapunov function and often requires a
priori knowledge of the attractor location. Closeness of solutions results
are categorized into two settings: the compact time interval or the
infinite time interval~\cite{PVK-HKK-JOR:99,HKK:02,FCH:66}.  For the
infinite time interval case, the reduced model and the boundary layer must
have exponentially stable equilibria, and the time-scale inducing
parameter, $\epsilon > 0$, must be sufficiently small.  Constraining
$\epsilon$ ensures that the time-scale separation is sufficiently
large. Under similar assumptions, asymptotic stability of the two-time
scale system is guaranteed. As expected, the bound on $\epsilon$ depends on
the selected Lyapunov functions.  For comprehensive discussions on
Lyapunov-based analysis, we refer the reader to~\cite[Ch.7]{PVK-HKK-JOR:99}
and\cite[Ch.~11]{HKK:02}.

% discussion on OFO relation

General results on the closeness of solutions underpin ISS-type results
pervasive in the OFO field~\cite{AH-ZH-SB-GH-FD:24}.  Specifically,
assumptions imposed on the cost functions of the associated optimization
problem, along with the exponential stability of the plant, ensure that the
reduced model (i.e., the feedback controller evaluated at steady state) and
the boundary layer (i.e., the plant) meet the assumptions for the infinite
time interval setting~\cite{PVK-HKK-JOR:99,HKK:02,FCH:66}. Furthermore, the
feedback controller’s gain acts as the time-scale inducing parameter.

To the best of our knowledge, \cite{DDV-JJES:13} and~\cite{DDV-JJES:11} are
the first to present results on the use of nonlinear contraction theory to
analyze singularly perturbed systems. By assuming that the fast and slow
subsystems are contractive, \cite{DDV-JJES:13} provides a bound on the
difference between the trajectories of the two-time scale system and those
of the reduced model.  In addition to the contractivity
assumptions,~\cite{DDV-JJES:11} requires that the largest singular value of
the generalized Jacobian of the two-time scale system be sufficiently
bounded to guarantee contractivity.

% The results in~\cite{DDV-JJES:13,DDV-JJES:11} inspired many research directions, including for the design of disturbance observers for dynamic surface controllers~\cite{rayguru2015contraction}, quantitative bounds for near-decomposability in complex systems~\cite{bousquet2015contraction}, and output feedback controller design~\cite{rayguru2022output}. \red{[maybe just remove last sentence? adds many references.]}

Lyapunov theory for continuous dynamics is not the only approach to analyze
singularly perturbed systems. The ISS properties of two-time scale systems
are investigated in \cite{ART-LM-DN:03}, which establishes ISS properties
by directly inspecting the system’s actual solutions instead of relying on
Lyapunov arguments. Singular perturbation theory has also been applied to
investigate the stability of hybrid systems~\cite{WW-ART-DN:12} and
extremum-seeking schemes~\cite{RJK-WHM-CM:16}.

\textit{Contributions}. % This paper makes three main contributions.
% \noindent \emph{(i)} 

We propose a novel contraction-theoretic approach for the analysis of
general two-time scale nonlinear systems subject to slow and fast
time-varying disturbances, as well as interconnected systems within the OFO
framework.  By assuming that the fast subsystem and the reduced model are
strongly contractive on vector spaces with suitable norms and ensuring that
the time-scale inducing parameter $\eps$ is sufficiently small, we offer
new and distinct results:
\begin{itemize}
\item Theorem~\ref{thm:closeness-of-solns-general}: We guarantee that the
  solutions of the two-time scale system asymptotically approach the
  solutions of its reduced model. Our bounds highlight the roles of
  (one-sided) Lipschitz constants, the time-scale inducing parameter, and
  the time-variability of disturbances.
\item Corollary~\ref{thm:closeness-of-solns-aut}: We derive intuitive
  bounds for the error in the tracking of equilibrium curves, providing
  individual tracking error bounds for the plant state and controller.
      
\item Theorem~\ref{th:ex-LTI-grad-eq-tracking}: We derive new transient and
  asymptotic bounds for the plant and controller separately instead of one
  bound on their concatenated state.
  
\item Theorem~\ref{thm:closeness-of-solns-LTI}: We provide explicit
  transient bounds for the two-time scale LTI solution. In
  Theorem~\ref{thm:str-contractive-whole-system-LTI}, we provide a new
  contractivity result for LTI systems and use weaker assumptions than
  those of the widely used Network Contraction
  Theorem~\cite{GR-MDB-EDS:13},\cite[Thm. 3.23]{FB:24-CTDS}.
\end{itemize}

% Comparison to Slotine paper:

Compared to~\cite{DDV-JJES:13,DDV-JJES:11}, we analyze more general
dynamics under alternative, comprehensive conditions. We also study
important example systems that inspired our work, such as gradient
controllers and LTI systems, which do not satisfy the assumptions
of~\cite{DDV-JJES:13,DDV-JJES:11}. For a detailed comparison
with~\cite{DDV-JJES:13,DDV-JJES:11}, we refer the reader to
Remark~\ref{rem:diff_slotine}.
% \red{[lem 2 -- LTI does not satisfy the uniform bound.]}

% Comparison against Lyapunov:
Separately, our contributions offer several advantages beyond the standard
treatment via Lyapunov theory~\cite{HKK:02}. In
Theorem~\ref{thm:closeness-of-solns-general}, we provide results dependent
on system parameters, yielding closed-form bounds instead of the
$\mathcal{O}(\eps)$ results obtained via Lyapunov
theory~\cite[Thm. 11.2]{HKK:02}.  Moreover, our sufficient conditions on
bounding $\epsilon$ is dependent on identifiable system parameters like
one-sided Lipschitz and Lipschitz constants of the sub-system dynamics
(like the strong convexity constant of a cost function in OFO).  Via
Lyapunov theory, analogous bounds on $\epsilon$ are only shown to
exist~\cite[Sec. 11.1-11.2]{HKK:02} or have expressions that usually depend
on parameters of the Lyapunov functions and are thus not easily
identifiable~\cite[Sec. 11.5]{HKK:02}.

% \red{[should we be comparing this Th 11.2 or an ISS type result since we have disturbances? This comparison is more direct in the simpler cases when we do not have disturbances. I worry here the comparison is more comparable to an ISS type result via Lyapunov theory. But, we do get a $\mathcal{O}(\eps)$ bound, too, so prob okay/nevermind]}. 

Importantly, typical application-specific assumptions that ensure the
dynamics suit Lyapunov-based conditions \emph{also} ensure that the
dynamics satisfy our contractivity assumptions, thus providing stronger
results without additional effort.
For instance, in the case of OFO, adopting these common assumptions allows
us to derive stronger results than those obtained from Lyapunov-based
arguments (see~\cite{LC-GB-EDA:22}). Using these assumptions, we provide
new individual error bounds on the state and controller for equilibrium
tracking to the optimizer. By combining our general theory with that
of~\cite[Thm. 2]{AD-VC-AG-GR-FB:23f}, we achieve greater comprehensibility
of the system’s asymptotic behavior compared to the corresponding
exponential ISS result in~\cite{LC-GB-EDA:22}, which presents one bound on
the concatenated closed-loop state.
Additionally, our bounds explicitly depend on system parameters, such as
the strong convexity constant and the coupling strength of the closed-loop
dynamics, whereas the Lyapunov-based approach leads to a bound also
dependent on the Lyapunov function parameters. Our method also offers
identifiable Lyapunov functions, while the previous analysis relied on the
mere existence of Lyapunov functions satisfying certain bounds.
In summary, our contraction-theoretic approach provides a simpler proof,
more explicit bounds dependent on system parameters, and greater clarity in
the asymptotic behavior of the system compared to traditional
Lyapunov-based methods.

\textit{Organization}.  
We introduce notation and necessary background knowledge in
Section~\ref{sec:notation-and-preliminaries}. Section~\ref{sec:extensions-and-applications}
motivates our study by examining the classic setting of OFO. Our main
results are presented in Section~\ref{sec:main-results}, with detailed
proofs provided in Section~\ref{sec:proofs}. In
Section~\ref{sec:results-for-OFO}, we use our main results to provide novel
asymptotic and transient bounds for the plant and controller separately in
the OFO case. Section~\ref{sec:LTI-dynamics} presents two key results on
the transient bound and overall contractivity of two-time scale LTI
systems. Finally, we conclude the paper in Section~\ref{sec:conclusions}.

%----------------------------------------
\section{Notation and Preliminaries}\label{sec:notation-and-preliminaries}
In this section, we introduce the notation used in the paper and recall basic definitions for contracting dynamical systems. 

\textit{a) Notation}. 
We denote by \(\R, \R_{> 0}, \text{ and } \R_{\geq 0} \) the set of real numbers, the set of positive real numbers, and the set of non-negative real numbers. 
Let $\zero_{n} \in \R^n$ be the vector of all zeros, and $I_n$ be the $n \times n$ identity map. 
For a vector $x \in \R^n$, $x^\top$ denotes transposition and $x_i$ denotes the $i$-th element of $x$. 
For vectors \( x \in \R^n \) and \( u \in \R^m \), \( (x,u) \in \R^{n + m}\) denotes their vector concatenation. We let $\|\cdot\|$ be a norm on $\R^n$ and its corresponding induced norm on $\R^{n \times n}$. Moreover, given  $A \in \R^{n \times n}$, we define the logarithmic norm (log-norm) induced by $\|\cdot \|$ as $\mu(A) := \lim_{h\to 0^+} \frac{\|I_n + h A\| - 1}{h}$. 
Given an invertible matrix $R \in \R^{n \times n}$, matrix norm $\|\cdot\|$, and matrix $A \in \R^{n \times n}$, we define the $R$-weighted norm as $\|A\|_R = \|R A R^{-1}\|$. %Given a symmetric positive definite matrix $P \in \R^{n \times n}$, we let $\| \cdot \|_P$ be defined as $\| \cdot \|_P = \sqrt{x^\top P x}$ for any $x \in \R^n$. 
Given a matrix $A \in \R^{n \times n}$, its spectrum $\operatorname{spec}(A)$ is the set of its eigenvalues, and its spectral abscissa is $\alpha(A) = \max\{ \Re(\lambda) : \lambda \in \operatorname{spec}(A) \}$, with $\Re(s)$ returning the real part of $s \in \mathbb{C}$. 
We let $B_x(y, r):= \{ x \in \R^{n} : \|x - y\| \leq r \} \subset \R^{n}$ denote a ball centered at $y$ with radius $r > 0$. Given a continuous function $g: \R \rightarrow \R$, we denote the upper Dini derivative as $D^+ g$; that is, $D^+ g(x)=\limsup_{h\to {0+}}{\frac {g(x+h)-g(x)}{h}}$. 

Given two normed spaces $(\mathcal{X}, \|\cdot \|_\mathcal{X})$, $(\mathcal{Y}, \|\cdot\|_\mathcal{Y})$, a map $F: \mathcal{X} \to \mathcal{Y}$ is Lipschitz from $(\mathcal{X}, \|\cdot \|_\mathcal{X})$ to $(\mathcal{Y}, \|\cdot\|_\mathcal{Y})$ with constant $\ell \geq 0$ if for all $x_1, x_2 \in \mathcal{X}$, it holds that $\|F(x_1) - F(x_2)\|_\mcY \leq \ell\|x_1 - x_2\|_\mcX$. 
 For maps dependent on multiple parameters, consider the normed spaces $\{ (\mathcal{X}_i, \|\cdot \|_{\mathcal{X}_i}) \}_{i=1}^k,$ $(\mathcal{Y}$ and $ \|\cdot\|_\mathcal{Y})$ and a map $F: \mathcal{X}_1 \times \dots \times \mathcal{X}_k \to \mathcal{Y}$. Without loss of generality, fix $i=1$. The map $F: \mathcal{X}_1 \times \dots \times \mathcal{X}_k \to \mathcal{Y}$.  is Lipschitz from $(\mathcal{X}_1, \|\cdot \|_{\mathcal{X}_1}) \times \dots \times (\mathcal{X}_k, \|\cdot \|_{\mathcal{X}_k}) $ to $(\mathcal{Y}, \|\cdot\|_\mathcal{Y})$ with constant $\ell_{x_1} \geq 0$ if, for all $x, x' \in \mathcal{X}_i$ and for any fixed $z \in \mathcal{X}_2 \times \dots \times \mathcal{X}_k$, it holds that \( \|F(x, z) - F(x', z)\|_\mathcal{Y} \leq \ell_{x_1} \|x - x'\|_{\mathcal{X}_1}\); we say that the map $F: \mathcal{X}_1 \times \dots \times \mathcal{X}_k \to \mathcal{Y}$ is Lipschitz with respect to $x \in \mathcal{X}_1$ with constant $\ell_{x_1} \geq 0$, or notationally, that $\Lip_{x_1} (f) \leq \ell_{f_{x_1}}$ holds uniformly in $z$. Similar definitions hold for $\Lip_{x_i}(f) \leq \ell_{f_{x_i}}$ for $i = 1, 2, \dots, k.$
% \red{[change st the norm and weak pairing is suitably defined on $\R^n.$]} 
 %
A continuous map $F: \R_{\geq 0} \times \R^{n_x} \to \R^{n_x}$ so that $(t,x) \mapsto F(t,x)$ with norm $\|\cdot\|_x$ and compatible weak pairing $\llbracket \cdot ; \cdot \rrbracket_x$ is one-sided Lipschitz with constant $-c \in \R$ if $\llbracket F(t,x) - F(t,x'); x - x' \rrbracket_x \leq -c \|x - x'\|_x^2$ holds for all $t \in \R_{\geq 0}$ and $x,x' \in \R^{n_x}$~\cite[Ch.~2]{FB:24-CTDS}; notationally, we denote the minimum one-sided Lipschitz constant of $F$ as $\osL_x(F) \leq -c.$  %, where $\llbracket \cdot ; \cdot \rrbracket: \R^n \times \R^n \rightarrow \R$ is a  weak pairing on $\R^n$. 
If $F$ is a multi-variable function, the minimum Lipschitz constant of $F$ with respect to variable $(\cdot)$ is denoted by $\Lip_{(\cdot)}(F)$, whereas $\osL_{(\cdot)}(F)$ denotes the minimum one-sided Lipschitz constant.

\textit{b) Contraction Theory.} 
Given a continuous vector field $F: \R_{\geq 0} \times \R^{n_x} \to \R^{n_x}$ so that $(t,x) \to F(t,x)$ with a norm $\|\cdot\|_x$ on $\R^{n_x}$ and compatible weak pairing $\llbracket \cdot ; \cdot \rrbracket_x$ and a constant $c > 0$, we say that $F$ is strongly infinitesimally contracting with respect to $\|\cdot\|_x$ with contraction rate $c$ if, for all $t \geq 0$, the map $x \to F(t,x)$ is one-sided Lipschitz with constant $-c$; i.e., $\osL_x(F) \leq -c$.
For $F$ that is locally Lipschitz in $x$, $\osL_x(F) \leq -c$ if and only if $\mu_x(D F(t,x)) \leq -c$ for all $t \geq 0$ and almost every $x \in \R^{n_x}$, where $D F(t,x) := \partial F(t,x) / \partial x$ is the Jacobian of $F$ with respect to $x$~\cite[Thm.~16]{AD-AVP-FB:22q} (the Jacobian exists for almost every $x$ by Rademacher’s theorem).  
We refer the reader to~\cite{FB:24-CTDS} for a review of the convergence and stability properties of contractive dynamical systems. 

\section{Online Feedback Optimization as a Two-Time Scale System}\label{sec:extensions-and-applications}

% \subsection{Online Feedback Optimization}\label{sec:feedback-optimization}
%

In this section, we motivate our main results by considering a new analysis of Online Feedback Optimization methods. 
% we show how to leverage our results to provide a new analysis of online feedback optimization methods~\cite{MC-ED-AB:20,AH-SB-GH-FD:21} and differential Stackelberg games~\cite{LJR-SAB-SSS:16,fiez2019convergence}.
In the Online Feedback Optimization (OFO) framework, a dynamic plant is interconnected with a controller that is designed based on first-order optimization methods; see the representative works~\cite{MC-ED-AB:20,AH-SB-GH-FD:21}. 
The objective is to regulate inputs and states of the plant to an optimal solution of an optimization problem. For simplicity, we focus on Linear Time-Invariant (LTI) systems and gradient-based controllers here. 
However, our tools are also applicable to nonlinear systems~\cite{AH-SB-GH-FD:21} and projected-gradient controllers~\cite{LC-GB-EDA:22}.  

Consider the following plant, modeled by an LTI system:
\begin{align}\label{eq:LTI-plant}
    \dot z = A z + B u + E w_z, % , ~~~  y = C z + D w
\end{align}
where
$z \in \R^{n_z}$ is the state, 
$u \in \R^{n_u}$ is the control input, 
$w_z \in \R^{w_z}$ is a time-varying unknown disturbance, 
and 
$A, B$, and $E$ are suitably sized matrices.
As is typical, we assume that the matrix $A \in \R^{n_z \times n_z}$ is Hurwitz. 
Then, for any fixed $u$ and $w_z$, the steady state $\subscr{z}{eq}(u,w_z)$ of the LTI system is given by $\subscr{z}{eq}(u,w_z) = G u + H w_z$, with $G:= -A^{-1}B$ and $H := - A^{-1} E$.  
Moreover, we assume that % on par with Assumption~\ref{as:extra-regularity0}, 
$t \to w_z(t)$ is locally absolutely continuous on $\R^{w_z}.$

The  goal of OFO is to design a feedback controller  regulating \eqref{eq:LTI-plant} to solutions of the  optimization problem
\begin{align}\label{eq:opt-prob-1}
    u^*(w_z(t)) \in \operatorname{argmin}_{\bar u \in \R^{n_u}} \phi(\bar u) + \psi(G \bar u + H w_z(t)),
\end{align}
$t \in \R_{\geq 0}$, where $\psi: \R^{n_u} \to \R$ and $\phi: \R^{n_z} \to \R$ are cost functions associated with the system's inputs and outputs, respectively. 

Regarding \eqref{eq:opt-prob-1}, we impose the following regularity assumptions where, for simplicity, $\|\cdot\|$ denotes the $\ell_2$-norm.

\vspace{.1cm}

\begin{assumption}\label{A:Lipschitz-costs}
The functions $u \to \phi(u)$ and $z \to \psi(z)$ are continuously differentiable. 
There exist constants $\ell_\phi, \ell_\psi \geq 0$ such that, for every $u, u' \in \R^{n_u}$ and $z, z' \in \R^{n_z}$, respectively, the following holds: $\|\nabla \phi(u) - \nabla \phi(u')\| \leq \ell_\phi \|u - u'\|$ and $\| \nabla \psi(z) - \nabla \psi(z')\| \leq \ell_\psi \|z - z'\|$.
\hfill $\triangle$
\end{assumption}

\vspace{.1cm}

\begin{assumption}\label{A:strongly-convex-cost}
The function $u \to \phi(u)$ is strongly convex with parameter $\nu > 0$; function $z \to \psi(z)$ is convex. \hfill $\triangle$ 
\end{assumption}

\vspace{.1cm}

These assumptions imply that, for any fixed $w_z$, the function $u \to \phi(u) +  \psi(G u + H w_z)$ is $\ell$-smooth with $\ell := \ell_\phi + \|G\|^2 \ell_\psi$ and  strongly convex with parameter $\nu > 0$ in the $\ell_2$-norm. 

In practice, such an optimization problem with Assumptions~\ref{A:Lipschitz-costs}-\ref{A:strongly-convex-cost} can represent  % \textcolor{red}{add comment on example here: power systems, traffic flow control in transportation networks, control of epidemics, neuroscience, autonomous driving, robotics (see references in OJCSYS intro).}
optimization problems in power systems~\cite{MC-ED-AB:20,AH-SB-GH-FD:21,SM-AH-SB-GH-FD:18,LSPL-ZEN-EM-JWSP:18} 
and neuroscience~\cite{yang2021modelling}, traffic flow control in transportation networks~\cite{GB-JC-JIP-EDA:22}, and epidemic control~\cite{bianchin2021planning}. 
When disturbances are not present in the plant model, this setting % the problem~\eqref{eq:opt-prob-1} 
can model problems in
% when the dynamical model for the plant does not include exogenous disturbances, our optimization-based controllers can also be utilized in the context of 
autonomous driving~\cite{dean2020robust,dean2021certainty} and robotics~\cite{AT-SF-MP-MHdB-FD:22}.

Altogether, the open-loop gradient flow reads
\begin{align}
    \dot u = - \eps\left( \nabla \phi(u) + G^\top \nabla \psi(G u + H w_z)\right) \label{eq:ex-LTI-openloop},
\end{align}
where $\eps > 0$ is a small controller gain to be tuned,
can be shown to be strongly infinitesimally contractive with contraction rate $\nu$ for each fixed $w_z$; see~\cite{AD-VC-AG-GR-FB:23f} for details. 

%Later, we confirm that the open-loop gradient flow~\eqref{eq:ex-LTI-openloop} is, in fact, our reduced system and, thus, satisfies Assumption~\ref{as:red-str-contracting}. 

Given the problem~\eqref{eq:opt-prob-1}, the OFO design strategy involves using the gradient flow~\eqref{eq:ex-LTI-openloop} as a state-feedback controller:
\begin{subequations}
\label{eq:ex-LTI-grad-temp}
    \begin{align} 
        \dot u &= - \eps \left( \nabla \phi(u) + G^\top \nabla \psi(z) \right), \,\, ~~ u(0) = u_0,\label{eq:ex-LTI-grad-b-temp} \\
        \dot z &= A z + B u + E w_z, ~~~~~~~~~~~~~~ z(0) = z_0 .\label{eq:ex-LTI-grad-a-temp}
    \end{align}
\end{subequations}
By comparing~\eqref{eq:ex-LTI-openloop} with~\eqref{eq:ex-LTI-grad-b-temp}, one can notice that measurements of the state $z$ are used in the gradient flow instead of the steady state $z_{\text{eq}}(u,w_z)$. 

Also, the controller gain $\eps > 0$ induces a time-scale separator between the plant and controller. As such, we may leverage singular perturbation arguments to analyze the system~\eqref{eq:ex-LTI-grad-temp}. To match conventional notation used in singular perturbation theory, we will equivalently rewrite % reorganize 
the system as follows:
\begin{subequations}
\label{eq:ex-LTI-grad}
    \begin{align} 
        \dot u &= - \nabla \phi(u) - G^\top \nabla \psi(z) , \,\, ~~ u(0) = u_0,\label{eq:ex-LTI-grad-b} \\
        \eps \dot z &= A z + B u + E w_z, ~~~~~~~~~ z(0) = z_0 ,\label{eq:ex-LTI-grad-a}
    \end{align}
\end{subequations}
where $\eps > 0$ more clearly induces a time-scale separation between the plant and the controller. % \textcolor{red}{LC: I think this is all equivalent, right (moving the gain over to $\dot z$)? Want to make sure the controller set up matches singular perturbation notation.}

Next, we will consider and provide results for a more general set of dynamics in Section~\ref{sec:main-results}, with proofs in Section~\ref{sec:proofs}, before returning to analyze the asymptotic behavior of system~\eqref{eq:ex-LTI-grad} in Section~\ref{sec:results-for-OFO}.

\section{Singular Perturbation via Contraction Theory}\label{sec:main-results}

\subsection{General Singular Perturbation Framework}\label{sec:general-nonlinear-dynamics-with-disturbances}

This section presents our main results for the analysis of two-time scale systems using contraction theory. 
First, we present a general analysis for nonlinear, singularly perturbed systems with time-varying exogenous inputs in Section~\ref{sec:main-results}. 
%The analysis is tailored to autonomous dynamical systems in Section~\ref{sec:simplified-nonlinear-dynamics}. 
%Finally, new results for the interconnection of LTI systems are presented in Section~\ref{sec:LTI-dynamics}. 

We consider the following \textit{two-time scale system}: 
\begin{subequations}
\label{eq:gen-system}
    \begin{align}\label{eq:gen-system-a}
    \dot x &= f(t,x,z,w_x,\eps), ~~ x(0) = x_0\\ 
    \label{eq:gen-system-b}
    \eps \dot z &= g(x,z,w_z,\eps), ~~~~~ z(0) = z_0,
\end{align}
\end{subequations}
where $t \geq 0$ denotes time, $\epsilon \geq 0$ is the time-scale inducing parameter, 
$x \in \R^{n_x}$ and  $z \in \R^{n_z}$ are the states of the two sub-systems  (with $x_0 \in \R^{n_x}$ and $z_0 \in \R^{n_z}$ as initial conditions), and $w_x: \R_{\geq 0} \to \R^{w_x}$,
$w_z: \R_{\geq 0} \to \R^{w_z}$ denote time-varying exogenous inputs. The vector fields $f: \R_{\geq 0} \times \R^{n_x} \times \R^{n_z} \times \R^{w_x} \times \R_{\geq 0} \to \R^{n_x}$ and $g: \R^{n_x} \times \R^{n_z} \times \R^{w_z} \times \R_{\geq 0} \to \R^{n_z}$ are continuous 
in their arguments.
Finally, assume there exist norms $\|\cdot\|_{x}, \|\cdot \|_{z}, \|\cdot\|_{w_x},$ and $ \|\cdot\|_{w_z}$  on $\R^{n_x}, \R^{n_z}, \R^{w_x},$ and $\R^{w_z}$, respectively, that are possibly different, and ensure such that $\|\cdot\|_x$ and $\|\cdot\|_z$ have compatible weak pairings $\llbracket \cdot ;  \cdot \rrbracket_x$ and $\llbracket \cdot ; \cdot \rrbracket_z$.
The parameter $\epsilon$ (which is taken to be $\eps < 1$ in standard singular perturbation arguments) induces a timescale separation between the two sub-systems, where~\eqref{eq:gen-system-a} is the  \emph{slow sub-system} and~\eqref{eq:gen-system-b} is the \emph{fast sub-system}~\cite{PVK-HKK-JOR:99,HKK:02}. By setting $\eps = 0$, one can eliminate the derivatives of $z$ and, thus, reduce the order of the model. The system~\eqref{eq:gen-system} is in \textit{standard form} when, for $\eps = 0$,~\eqref{eq:gen-system-b} has isolated real roots. 
For the two-time scale system~\eqref{eq:gen-system}, we make the following assumptions.

% \red{[Careful: $\osL$ is defined using weak pairing, so for $\|\cdot\|_x$ and $\|\cdot\|_z,$ we need the weak pairings to show up here. Change notation.]}

\begin{assumption}[Contractivity of the fast dynamics]
\label{as:fast-str-contracting}
    There exists $c_g > 0$ such that $\osL_z(g(x,z,w_z,\eps)) \leq -c_g$ holds uniformly in $x$, $w_z$, and $\eps$. % \blue{with respect to $\|\cdot\|_{z}$}.
    \hfill $\triangle$
\end{assumption}

\begin{assumption}[Continuity of slow dynamics]
\label{as:Lipschitz-f-in-t}
  There exists $\ell_{f_x} \geq 0$ such that $\Lip_x (f(t,x,z,w_x,\eps)) \leq \ell_{f_x}$ holds uniformly in $z$, $w_x$, $\eps$, and $t$.
    \hfill $\triangle$
\end{assumption}

\begin{assumption}[Lipschitz interconnections]
\label{as:Lipschitz-interconnection-x-z}
    There exists $\ell_{f_z} > 0$ such that $\Lip_z (f(t,x,z,w_x,\eps)) \leq \ell_{f_z}$ holds uniformly in $x$, $w_x$, $\eps$, and $t$. 
    There exists $\ell_{g_x} > 0$ such that $\Lip_x (g(x,z,w_z,\eps)) \leq \ell_{g_x}$ holds uniformly in $z$, $w_z$, $\eps$, and $t$. 
    \hfill $\triangle$
\end{assumption}

% pushed to new line bc the \ells \geq 0 ran over the edge of the page
\begin{assumption}[Continuity in the parameters]
\label{as:Lipschitz-interconnection-else}

\noindent There exists $\ell_{f_t}, \ell_{f_w}, \ell_{f_\eps}, \ell_{g_w}, \ell_{g_\eps} \geq 0$ such that:

\noindent \ref{as:Lipschitz-interconnection-else}.\text{i)} \hspace{.1cm} $\Lip_t (f(t,x,z,w_x,\eps)) \leq \ell_{f_t}$ uniformly in $x,z,w_x, \eps$;

\noindent \ref{as:Lipschitz-interconnection-else}.\text{ii)} \hspace{.0cm} $\Lip_{w_x} (f(t,x,z,w_x,\eps)) \leq \ell_{f_w}$ uniformly in $x,z,\eps, t$; 

\noindent \ref{as:Lipschitz-interconnection-else}.\text{iii)} \hspace{-.07cm} $\Lip_{\eps} (f(t, x, z, w_x, \eps)) \leq \ell_{f_\eps}$ uniformly in $x,z,w_x, t$;   

\noindent \ref{as:Lipschitz-interconnection-else}.\text{iv)} \hspace{-.025cm} $\Lip_{w_z} (g(x, z, w_z, \eps)) \leq \ell_{g_w}$ uniformly in $x,z,\eps,  t$; 
    
\noindent \ref{as:Lipschitz-interconnection-else}.\text{v)} \hspace{.052cm} $\Lip_{\eps} (g(x,z,w_z, \eps)) \leq \ell_{g_\eps}$ uniformly in $x,z,w_z, t$.
\hfill $\triangle$
\end{assumption}

\begin{assumption}[Smoothness of the disturbances]
\label{as:extra-regularity0}
    The maps $t \to w_z(t)$ and $t \to w_x(t)$ are locally absolutely continuous on $\R^{w_z}$ and on  $\R^{w_x}$, respectively.  \hfill $\triangle$
\end{assumption}

Since the vector field $g$ is continuous and satisfies Assumption~\ref{as:fast-str-contracting}, then a solution $z(t)$ of~\eqref{eq:gen-system-a} exists and is unique \cite[Exercise E3.4]{FB:24-CTDS}. For the slow system~\eqref{eq:gen-system-b}, a solution $x(t)$ is guaranteed to exist and be unique since $f$ is continuous and satisfies Assumption~\ref{as:Lipschitz-f-in-t}% by the Picard-Lindelof Theorem
~\cite[E1.2]{FB:24-CTDS}.

\begin{remark}{(Local vs global sets):}\label{rem:localvsglobalsets}
    Assumptions~\ref{as:fast-str-contracting}-\ref{as:extra-regularity0} may hold locally on convex, forward-invariant sets; we choose to write these assumptions globally for simplicity of exposition. Further, the OFO setting satisfies these assumptions globally. % \red{comment on OFO satisfies this! [easier to extend to case of local contractivity over convex, forward-invariant sets at the expense of heavier notation]}
\end{remark}

Assumption~\ref{as:fast-str-contracting} implies that, for given $x, w_z$ and $\eps$, the equation 
$g(x,z,w_z,\eps) = 0$ has a unique solution~\cite[Thm.~3.9]{FB:24-CTDS}, which is hereafter denoted as $z^*(x, w_z,\eps)$. 
With $\eps = 0$,  
$z^*(x, w_z,0)$ is the root of $g(x,z,w_z,0) = 0$~\cite{PVK-HKK-JOR:99}; to simplify notation,  we define $z^*(x, w_z) := z^*(x, w_z,0)$. Assumption~\ref{as:fast-str-contracting} also implies that, for any $\eps > 0$, the fast sub-system~
\eqref{eq:gen-system-b} is strongly infinitesimally contractive with rate $c_g/\eps$ in the domain $t$.  
Assumption~\ref{as:extra-regularity0} ensures that the essential suprema of $\|\dot w_z\|$ and of $\|\dot w_x\|$ are well defined. %, which will be used in our main results. 

Lastly, we note the following properties and form an assumption on $z^*(x,w_z,\eps)$.
% Our proofs will leverage the following lemma about the unique solution to $g(x,z,w_z,\eps) = 0$. 

\begin{lemma}[Lipschitz solutions to $g(x,z,w_z,\eps) = 0$]
\label{lem:lipz}
Let Assumptions~\ref{as:fast-str-contracting} and~\ref{as:Lipschitz-interconnection-else} hold, and let $z^*(x, w_z,\eps)$ be the unique solution to 
    $g(x,z,w_z,\eps) = 0$ for given $x, w_z$ and $\eps$. Then, the following holds: 

\noindent \emph{i)} For any fixed $w_z$ and $\eps$, the map $x \mapsto z^*(x,w_z,\eps)$ is Lipschitz with constant $\frac{\ell_{g_x}}{c_g}$. 

\noindent \emph{ii)} For any fixed $x$ and $\eps$, the map $w_z \mapsto z^*(x, w_z,\eps)$ is Lipschitz with constant $\frac{\ell_{g_w}}{c_g}$. 

\noindent \emph{iii)} For any fixed $x$ and $w_z$, the map $\eps \mapsto z^*(x, w_z,\eps)$ is Lipschitz with constant $\frac{\ell_{g_\eps}}{c_g}$. 
\QEDB
\end{lemma}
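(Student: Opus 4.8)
The plan is to prove all three statements by the same two-step argument, exploiting that $z \mapsto g(x,z,w_z,\eps)$ is strongly contracting (one-sided Lipschitz with constant $-c_g$, by Assumption~\ref{as:fast-str-contracting}) while $g$ is Lipschitz in the perturbed argument. I will detail part i) in full and then observe that ii) and iii) are identical with $\ell_{g_w}$ and $\ell_{g_\eps}$ in place of $\ell_{g_x}$.

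For part i), fix $w_z$ and $\eps$, pick $x, x'$, and set $z_1 := z^*(x,w_z,\eps)$ and $z_2 := z^*(x',w_z,\eps)$, so that $g(x,z_1,w_z,\eps)=0$ and $g(x',z_2,w_z,\eps)=0$. Applying the one-sided Lipschitz hypothesis to the pair $z_1, z_2$ at the \emph{fixed} first argument $x$ gives $\llbracket g(x,z_1,w_z,\eps) - g(x,z_2,w_z,\eps); z_1 - z_2\rrbracket_z \le -c_g \|z_1-z_2\|_z^2$. Since $g(x,z_1,w_z,\eps)=0$, the left-hand pairing equals $\llbracket -g(x,z_2,w_z,\eps); z_1-z_2\rrbracket_z$ verbatim, so no negative scalar needs to be factored out of the pairing, which matters because weak homogeneity only holds for nonnegative scalars.

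Next I invoke the Cauchy--Schwarz property of the compatible weak pairing in its lower-bound form $\llbracket u; v\rrbracket_z \ge -\|u\|_z\|v\|_z$, taking $u = -g(x,z_2,w_z,\eps)$ and $v = z_1-z_2$. Combined with the previous inequality this yields $-\|g(x,z_2,w_z,\eps)\|_z \|z_1-z_2\|_z \le -c_g\|z_1-z_2\|_z^2$, hence $c_g\|z_1-z_2\|_z \le \|g(x,z_2,w_z,\eps)\|_z$ (the estimate is trivial when $z_1=z_2$). Finally, because $g(x',z_2,w_z,\eps)=0$, I rewrite $g(x,z_2,w_z,\eps) = g(x,z_2,w_z,\eps) - g(x',z_2,w_z,\eps)$ and bound its norm by $\ell_{g_x}\|x-x'\|_x$ using the Lipschitz-in-$x$ bound on $g$ from Assumption~\ref{as:Lipschitz-interconnection-x-z}, which gives $\|z_1-z_2\|_z \le (\ell_{g_x}/c_g)\|x-x'\|_x$, i.e.\ claim i). Claims ii) and iii) follow by the same computation, instead perturbing $w_z$ (resp.\ $\eps$) while holding the other arguments fixed and invoking the Lipschitz bound $\ell_{g_w}$ (resp.\ $\ell_{g_\eps}$) from Assumption~\ref{as:Lipschitz-interconnection-else}.

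I expect no serious obstacle; the only delicate point is the bookkeeping with the weak pairing, which is neither symmetric nor bilinear. The substitution must keep the contractivity expression $g(x,z_1,w_z,\eps)-g(x,z_2,w_z,\eps)$ intact — using a root to make one term vanish — rather than attempting to distribute a minus sign through $\llbracket \cdot;\cdot\rrbracket_z$, and the Cauchy--Schwarz step must be used in the $\ge -\|u\|_z\|v\|_z$ direction. As an alternative one could instead cite the Lipschitz dependence of the unique zero of a contracting map on its parameters from~\cite{FB:24-CTDS}, but the weak-pairing computation above is self-contained and makes the explicit constants transparent.
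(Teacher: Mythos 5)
Your proof is correct and is exactly the argument the paper defers to: the paper establishes this lemma only by citing \cite[Lem.~1]{AD-VC-AG-GR-FB:23f}, whose content is precisely your weak-pairing computation (apply the one-sided Lipschitz bound at the two roots, use Cauchy--Schwarz in the form $\llbracket u;v\rrbracket_z \ge -\|u\|_z\|v\|_z$, then bound the remaining term by the Lipschitz constant of $g$ in the perturbed argument). One minor remark: the constant $\ell_{g_x}$ needed for part~i) is introduced in Assumption~\ref{as:Lipschitz-interconnection-x-z}, which you correctly invoke even though the lemma's hypothesis list names only Assumptions~\ref{as:fast-str-contracting} and~\ref{as:Lipschitz-interconnection-else}.
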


% Citing Sasha's papers: 
% https://arxiv.org/pdf/2305.15595
% https://arxiv.org/pdf/2110.08298
The Lipschitz constants in Lemma~\ref{lem:lipz} can be derived using the arguments of~\cite[Lem. 1]{AD-VC-AG-GR-FB:23f}. 
Further, the Rademacher Theorem states that if a function is Lipschitz continuous on an open subset of Euclidean space, then it is almost everywhere differentiable. Hence, item $i)$ of Lemma~\ref{lem:lipz} means that for every $t \geq 0$ and almost every $x \in \R^{n_x}$, the map $x \mapsto z^*(x, w_z, \eps)$ is almost everywhere differentiable for given $w_z $ and $ \eps$ (and so on for items $ii)$ and $iii)$).

%%%%%%%% REMOVING BC WE DON'T NEED ANYMORE: 
% \red{We don't need this assumption... We needed it in the old local contraction theorem (which we don't have anymore). Right?}
% \begin{assumption}[Differentiability of the fixed point map $z^*$]
% \label{as:extra-regularity} 
% %For any fixed $w_z$ and $\eps$, the map  $x \mapsto z^*(x, w_z,\eps)$ is globally differentiable; for any fixed $x$ and $\eps$, $w_z \mapsto z^*(x, w_z,\eps)$ is globally differentiable. %Moreover,  
% There exists $\ell_{z^*_w} \geq 0$ such that
% $\Lip_{x} (\frac{\partial z^*}{\partial w_z}(x,w_z,\eps)) \leq \ell_{z^*_w}$ holds uniformly in $w_z$ and $\eps$. \hfill $\triangle$
% % , for any $w_z \in \R^{w_z}$, $\eps \geq 0$, the map $x \mapsto \frac{\partial z^*}{\partial w_z}(x,w_z,\eps)$ is Lipschitz with constant $\ell_{z^*_w}.$ \hfill $\triangle$
% \end{assumption}

% \vspace{.1cm} 

% \textcolor{red}{Do we remove the first two sentences since we have Lemma~\ref{lem:lipz}, and by Rademacher theorem, we get that the map $z^*$ is almost everywhere differentiable? Keep the last sentence of As 8?}

% Assumption~\ref{as:extra-regularity} is a reasonable ask, since by t

Following the standard singular perturbation framework~\cite{PVK-HKK-JOR:99,HKK:02}, we now define the reduced model and the boundary layer system associated with the two-time scale system~\eqref{eq:gen-system}. The \textit{reduced model} is defined as:
    \begin{align}\label{eq:gen-reduced}
        \dot x_{\text{r}} & = f_\text{red}(t,x_{\text{r}}, w_x,w_z)  := f(t,x_{\text{r}}, z^*(x_{\text{r}},w_z), w_x,0) 
    \end{align}   
with initial condition $x_{\text{r}}(0) = x_{\text{r},0}$, where the fast state variable $z$ is substituted by its quasi-steady state $z^*(x_{\text{r}},w_z)$. 
Since the reduced model~\eqref{eq:gen-reduced} is fictitious and only used to compare against the slow dynamics~\eqref{eq:gen-system-a}, 
% \red{[change language from customary]} 
we select $x_r(0) = x(0)$ without loss of generality; this will simplify some terms in our analysis. 
Effectively, the state $x_{\text{r}}(t)$ of the reduced model \eqref{eq:gen-reduced} serves as an approximation of the state $x(t)$ of the original slow sub-system~\eqref{eq:gen-system-a}. 
On the other hand, $z^*(x_{\text{r}},w_z)$ is expected to be an approximation of the fast state variable $z(t)$ only after an initial time interval,
so a correction term must be utilized to compensate for $z_0 - z^*(x_{\text{r},0}, w_z(0))$. 
The dynamics for this initial adjustment is defined as the \textit{boundary layer system}, given by:
    \begin{align}
    \label{eq:gen-BL}
        \begin{split}
        \frac{d y_{\text{bl}}}{d \tau} 
        = g(x, y_{\text{bl}} + z^*(x,w_z),w_z, 0), 
        ~~~~\tau = \frac{t}{\eps},
        \end{split}
    \end{align}
where $\frac{d y_{\text{bl}}}{d \tau} = \eps \frac{d y_{\text{bl}}}{dt}$, 
$\tau = 0$ is the initial value when $t = 0$, 
and $x$ is treated as a slowly varying parameter~\cite{PVK-HKK-JOR:99,HKK:02}. 
While~\eqref{eq:gen-BL} is contractive with rate $c_g$ in the time scale $\tau$, we make the following assumptions on the reduced model. 

\vspace{.1cm}

\begin{assumption}[Contractivity of the reduced model]
\label{as:red-str-contracting}
    There exists $c_f > 0$ such that $\osL_x(f_\text{red}(t,x_{\text{r}}, w_x,w_z)) \leq -c_f$ holds uniformly in $w_x$, $w_z$, and $t$. \hfill $\triangle$ 
\end{assumption}

% \begin{table}[t!]
%     \centering
%     \begin{tabular}{p{3.3cm}|p{3.3cm}} 
%  \hline \hline
% { \vspace{-.4cm}
% \begin{align*}
%     c_f &= -\osL_x(f_\text{red}) \\
%  \ell_{f_z} &= \Lip_z(f)  \\
%  \ell_{f_\eps} &= \Lip_\eps(f) \\
%  \ell_{g_w} &= \Lip_{w_z} (g)  \\
% \ell_{z^*_w} &= \Lip_x \left(\partial z^* / \partial w_z\right)  \\
%  \ell_{f_t} &= \Lip_t(f) \\
%  L_{\eps,f_x} &= \Lip_x(D_\eps(f)) \\
% \ell_{f_x} &= \Lip_x(f)
% \end{align*}\vspace{-.4cm}} & {\vspace{-.4cm} \begin{align*}
%     c_g &= -\osL_z(g) \\
%  \ell_{g_x} &= \Lip_x(g)  \\
%  \ell_{g_\eps} &= \Lip_\eps (g) \\
%  \ell_{f_{w_x}} &= \Lip_{w_x} (f)  \\
%  \ell_{f_{w_z}} &= \Lip_{w_z}(f_{\text{red}}) \\
%  L_{z,f_x} &= \Lip_x(D_z(f)) \\
% L_{z,g_x} &= \Lip_x(D_z(g)) \\
% L_{\eps, g_x} &= \Lip_x(D_\eps(g)) 
% \end{align*} \vspace{-.4cm}}  \\
%  \hline \hline
%     \end{tabular}
%     \caption{Summary of the constants in Assumptions~\ref{as:fast-str-contracting}-\ref{as:red-str-contracting} \vspace{-0.5cm}}
%     \label{tab:table-of-Lips-symbols}
% \end{table}

\begin{table}[t!]
    \centering
    \begin{tabular}{p{3.3cm}|p{3.3cm}|p{3.3cm}} 
 \hline \hline
{ \vspace{-.4cm}
\begin{align*}
    c_f &= -\osL_x(f_\text{red}) \\
 \ell_{f_z} &= \Lip_z(f)  \\
 \ell_{f_\eps} &= \Lip_\eps(f) \\
 \ell_{g_w} &= \Lip_{w_z} (g) % \\
% \ell_{f_t} &= \Lip_t(f) \\
% L_{\eps,f_x} &= \Lip_x(D_\eps(f)) \\
%\ell_{f_x} &= \Lip_x(f)
\end{align*}\vspace{-.4cm}} & { \vspace{-.4cm}
\begin{align*}
%     c_f &= -\osL_x(f_\text{red}) \\
%  \ell_{f_z} &= \Lip_z(f)  \\
%  \ell_{f_\eps} &= \Lip_\eps(f) \\
%  \ell_{g_w} &= \Lip_{w_z} (g)  \\
% \ell_{z^*_w} &= \Lip_x \left(\partial z^* / \partial w_z\right)  \\
\ell_{z^*_w} &= \Lip_x \left(\partial z^* / \partial w_z\right)  \\
 \ell_{f_t} &= \Lip_t(f) \\
% L_{\eps,f_x} &= \Lip_x(D_\eps(f)) \\
\ell_{f_x} &= \Lip_x(f) \\
 \ell_{f_{w_z}} &= \Lip_{w_z}(f_{\text{red}}) 
%L_{z,g_x} &= \Lip_x(D_z(g)) \\
%L_{\eps, g_x} &= \Lip_x(D_\eps(g) \\
~&~
\end{align*}\vspace{-.4cm}} & {\vspace{-.4cm} \begin{align*}
    c_g &= -\osL_z(g) \\
 \ell_{g_x} &= \Lip_x(g)  \\
 \ell_{g_\eps} &= \Lip_\eps (g) \\
 \ell_{f_{w_x}} &= \Lip_{w_x} (f)  
 %L_{z,f_x} &= \Lip_x(D_z(f)) \\
\end{align*} \vspace{-.4cm}}  \\
 \hline \hline
    \end{tabular}
    \caption{Summary of the constants in Assumptions~\ref{as:fast-str-contracting}-\ref{as:red-str-contracting} \vspace{-0.5cm}}
    \label{tab:table-of-Lips-symbols}
\end{table}

Hereafter, we denote $t \mapsto
\subscr{x}{r}(t)$ as a solution to the reduced
model~\eqref{eq:gen-reduced}, which is guaranteed to exist and be unique~\cite[Exercise E3.4]{FB:24-CTDS}. 
After presenting our main results, we will discuss how our assumptions compare with the classical analysis using Lyapunov theory~\cite{PVK-HKK-JOR:99,HKK:02,FCH:66} in Remark~\ref{rem:comp_Lyap}.

To state some of our results, it is convenient to consider the
two-time scale system~\eqref{eq:gen-system} after a variable shift. For
fixed $w_z$ and $\eps$, consider the shifted fast variable $y := z -
z^*(x,w_z)$; then, the two-time scale system~\eqref{eq:gen-system} in the new variables defines the \textit{shifted system}, given as:
\begin{subequations}
\label{eq:gen-new-var-w}
    \begin{align}\label{eq:gen-new-var-w-a}
        \dot x &= f(t,x, y + z^*(x,w_z), w_x,\eps) \\
 %           \eps \dot y &= g(x, y + z^*(x,w_z), w_z,\eps) - \eps \frac{\partial z^*}{\partial w_z}(x,w_z) \dot w_z \nonumber \\
 %           & \quad\quad\quad\quad\quad\quad\quad\quad\quad\quad\quad\quad\quad\quad - \eps \frac{\partial z^*}{\partial x}(x, w_z) f(t, x, y + z^*(x,w_z), w_x,\eps) \label{eq:gen-new-var-w-b}            \\
            \eps \dot y &=  g(x, y + z^*(x,w_z), w_z,\eps) - \eps \frac{\partial z^*}{\partial w_z}(x,w_z) \big( \dot w_z - f(t, x, y + z^*(x,w_z), w_x,\eps)  \big)  \label{eq:gen-new-var-w-b}
    \end{align}
\end{subequations}
% {\small\color{red} if we write $\dot y$ in this way, we can condense it to one line!} AGREED!
% {\color{blue} From FB: I prefer it this way; if you agree, please erase this blue words}
with initial conditions $x(0) = x_0$ and $y(0) = z_0 - z^*(x_0, w_z(0))$. 
The reduced model~\eqref{eq:gen-reduced} and the boundary layer
system~\eqref{eq:gen-BL} can be seen as the limiting
systems~\cite{HKK:02} of the shifted system~\eqref{eq:gen-new-var-w} for $\eps \rightarrow
0^+$.

Our main results for the two-time scale system~\eqref{eq:gen-system} are presented next. The first result pertains to the closeness of the trajectories of the two-time scale system~\eqref{eq:gen-system} and of the pair $\subscr{x}{r}(t)$, $z^*(x_\text{r}(t),w_z(t))$. 
For ease of reference, Table~\ref{tab:table-of-Lips-symbols} lists the relevant Lipschitz constants. 

\vspace{.1cm}

\begin{theorem}[Closeness of the solutions]
\label{thm:closeness-of-solns-general}
Consider the two-time scale system \eqref{eq:gen-system} satisfying the  (contractivity) Assumptions~\ref{as:fast-str-contracting} and~\ref{as:red-str-contracting}, (continuity) Assumptions~\ref{as:Lipschitz-f-in-t}-\ref{as:Lipschitz-interconnection-else},  
and (disturbance) Assumption~\ref{as:extra-regularity0}. 
%%%%%%%% REMOVING BC WE DON'T NEED ANYMORE: 
%, and (differentiability of $z^*$) Assumption~\ref{as:extra-regularity}. 
Suppose that,  
\begin{align}
   0< \eps < \eps_0^* := \frac{c_g^2}{\ell_{g_x} \ell_{f_z}}.
\end{align} 
Then, for any initial condition, the two-time scale system~\eqref{eq:gen-system} has a
unique solution $x(t), z(t)$, for $t \geq 0$ satisfying
\allowdisplaybreaks
\begin{subequations}\label{eq:thm-close-solns-general}
\begin{align}
       \|x(t) - \subscr{x}{r}(t)\|_{x} &\leq \frac{\eps }{c_g - \eps \frac{\ell_{g_x}}{c_g} \ell_{f_z}} ( \delta_{x,2} + \delta_{x,3} \bar{w}_z + \delta_{x,4} \bar{w}_x)  \nonumber \\ 
       & \qquad \quad + \frac{\eps^2 }{c_g - \eps \frac{\ell_{g_x}}{c_g} \ell_{f_z} } \delta_{x,1} + E_x(t,\eps)   \label{eq:thm-close-solns-general-a}, \\
       \|z(t)  - z^*(x_\text{r}(t),w_z(t))\|_{z} &\leq  \frac{\eps^2 }{c_g - \eps \frac{\ell_{g_x}}{c_g} \ell_{f_z}} \delta_{z,1} \nonumber \\ 
       & \qquad \quad + \frac{\eps }{c_g - \eps \frac{\ell_{g_x}}{c_g} \ell_{f_z}}( \delta_{z,2} + \delta_{z,3} \bar{w}_z + \delta_{z,4} \bar{w}_x)  + E_z(t, \eps),
     \label{eq:thm-close-solns-general-b} 
\end{align}
\end{subequations}
where $t \mapsto \subscr{x}{r}(t)$ is a solution of the reduced
model~\eqref{eq:gen-reduced}, $\bar w_x := \operatorname{ess~sup}_{s \geq
  0} \|\dot w_x(s)\|_{w_x}$, $\bar w_z := \operatorname{ess~sup}_{s \geq 0}
\|\dot w_z(s)\|_{w_z}$, the maps $t \mapsto E_x(t,\eps)$ and $t \mapsto
E_z(t,\eps)$ tend to $0$ as $t \rightarrow +\infty$, and the positive
constants $\{\delta_{x,i}, \delta_{z,i}\}_{i = 1}^4$ are defined by
\begin{align}\label{eq:thm1-deltas}
  \begin{split}
    \delta_{x,1} &:=  \frac{\ell_{f_z} \ell_{g_x} \ell_{f_\eps}}{c_g}, ~ \delta_{x,2} :=  \ell_{f_z} \ell_{g_\eps} + \frac{\ell_{f_z} \ell_{g_x} \ell_{f_t}}{c_f c_g} + \ell_{f_\eps} c_g , ~ \delta_{x,3} := \frac{\ell_{f_z} \ell_{g_w}}{c_g} + \frac{\ell_{f_z}^2 \ell_{g_x} \ell_{g_w}}{c_f c_g^2} , \\
    \delta_{x,4} &:= \frac{\ell_{f_z} \ell_{g_x} \ell_{f_w}}{c_f c_g} , ~ \delta_{z,1} := 
    \frac{ c_f \ell_{f_\eps}}{\ell_{f_w} }\delta_{z,4} , ~ \delta_{z,2} := \left(1 + \frac{\ell_{g_x} \ell_{f_z}}{c_g}\right)\left(\ell_{g_\eps} + \frac{\ell_{g_x} \ell_{f_t}}{c_f c_g}\right) + \ell_{g_x} \ell_{f_\eps} , \\
    \delta_{z,3} &:=   \left(\frac{\ell_{g_w}}{c_g} + \frac{\ell_{g_x} \ell_{g_w}}{c_f c_g^2} \right)\left(1 + \frac{\ell_{g_x} \ell_{f_z}}{c_g}\right), ~\delta_{z,4} :=  \frac{\ell_{g_x} \ell_{f_w}}{c_g c_f}\left(1 + \frac{\ell_{g_x} \ell_{f_z}}{c_g}\right) .
    \qquad\qquad\qquad \QEDB
  \end{split}
\end{align}
\end{theorem}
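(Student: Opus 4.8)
The plan is to pass to the shifted coordinates $(x,y)$ with $y := z - z^*(x,w_z)$, estimate the two error quantities $\|x-\subscr{x}{r}\|_x$ and $\|y\|_z$ via the curve-norm derivative formula for weak pairings, and close the resulting pair of scalar differential inequalities as a two-dimensional positive comparison system whose stability is governed exactly by the threshold $\eps<\eps_0^*$. Existence and uniqueness of $x(t),z(t)$ follow as already noted from Assumptions~\ref{as:fast-str-contracting} and~\ref{as:Lipschitz-f-in-t}, so the substance is the estimate. For the slow error $p(t):=\|x(t)-\subscr{x}{r}(t)\|_x$, I write $\dot x-\dot{\subscr{x}{r}} = f(t,x,z,w_x,\eps)-f_\text{red}(t,\subscr{x}{r},w_x,w_z)$ and insert the intermediate terms $f(t,x,z^*(x,w_z),w_x,\eps)$ and $f_\text{red}(t,x,w_x,w_z)=f(t,x,z^*(x,w_z),w_x,0)$. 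Applying $\llbracket\cdot;\cdot\rrbracket_x$, the contractivity of the reduced model (Assumption~\ref{as:red-str-contracting}) controls the $f_\text{red}(t,x,\cdot)-f_\text{red}(t,\subscr{x}{r},\cdot)$ difference by $-c_f p$, while Assumption~\ref{as:Lipschitz-interconnection-x-z} (Lipschitz in $z$) and Assumption~\ref{as:Lipschitz-interconnection-else}.iii) (Lipschitz in $\eps$) bound the two remaining differences by $\ell_{f_z}\|y\|_z$ and $\ell_{f_\eps}\eps$, giving $D^+ p\le -c_f\,p+\ell_{f_z}\|y\|_z+\ell_{f_\eps}\eps$.

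Next, for the fast error $q(t):=\|y(t)\|_z$, I use the shifted dynamics~\eqref{eq:gen-new-var-w-b}. Writing $g(x,y+z^*,w_z,\eps)=\big(g(x,y+z^*,w_z,\eps)-g(x,z^*,w_z,\eps)\big)+g(x,z^*,w_z,\eps)$, Assumption~\ref{as:fast-str-contracting} bounds the first group by $-c_g q^2$ in the pairing, and $g(x,z^*,w_z,0)=0$ with Assumption~\ref{as:Lipschitz-interconnection-else}.v) bounds the residual by $\ell_{g_\eps}\eps$. The manifold-correction term contributes $\eps$ times the velocity of the quasi-steady-state, which through Lemma~\ref{lem:lipz} is bounded by $\tfrac{\ell_{g_x}}{c_g}\|\dot x\|+\tfrac{\ell_{g_w}}{c_g}\bar w_z$. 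After dividing by $\eps q$, the crucial observation is that $\|\dot x\|$ carries an $\ell_{f_z}\|y\|_z$ contribution, which recombines with the leading term into the \emph{effective} fast rate $-\tfrac1\eps\big(c_g-\eps\tfrac{\ell_{g_x}}{c_g}\ell_{f_z}\big)q$; positivity of this rate is precisely $\eps<\eps_0^*$. The remaining pieces of $\|\dot x\|$ are a persistent forcing term $\|\dot{\subscr{x}{r}}\|$ together with corrections proportional to $p$ and $\eps$.

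The drift $\|\dot{\subscr{x}{r}}\|$ is handled by an auxiliary contraction estimate: differentiating $\dot{\subscr{x}{r}}=f_\text{red}(t,\subscr{x}{r},w_x,w_z)$ and pairing with $\osL_x(f_\text{red})\le-c_f$ gives $D^+\|\dot{\subscr{x}{r}}\|\le -c_f\|\dot{\subscr{x}{r}}\|+\ell_{f_t}+\ell_{f_w}\bar w_x+\ell_{f_{w_z}}\bar w_z$, so $\|\dot{\subscr{x}{r}}(t)\|$ is asymptotically bounded by $\tfrac1{c_f}(\ell_{f_t}+\ell_{f_w}\bar w_x+\ell_{f_{w_z}}\bar w_z)$ with $\ell_{f_{w_z}}=\ell_{f_z}\ell_{g_w}/c_g$ as in Table~\ref{tab:table-of-Lips-symbols}, up to a decaying transient. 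Substituting this into the fast inequality, the pair $(p,q)$ is dominated by the solution of a constant-coefficient $2\times2$ positive linear system whose system matrix is Hurwitz exactly when $\eps<\eps_0^*$. Invoking the comparison lemma and solving the $2\times2$ system, the particular (steady-state) part yields the closed-form bounds with common denominator $c_g-\eps\tfrac{\ell_{g_x}}{c_g}\ell_{f_z}$ and the coefficients $\delta_{x,i}$, while the homogeneous part supplies the decaying maps $E_x(t,\eps),E_z(t,\eps)$. The second estimate then follows from $\|z-z^*(\subscr{x}{r},w_z)\|_z\le q+\|z^*(x,w_z)-z^*(\subscr{x}{r},w_z)\|_z\le q+\tfrac{\ell_{g_x}}{c_g}p$ (Lemma~\ref{lem:lipz}.i)), which reorganizes the $\delta_{x,i}$ into the $\delta_{z,i}$.

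The main obstacle I anticipate is the fast inequality: correctly isolating the self-coupling of $\|\dot x\|$ through the moving manifold so that the effective rate $c_g-\eps\ell_{g_x}\ell_{f_z}/c_g$ — and hence the threshold $\eps_0^*$ — emerges, while simultaneously bounding the persistent drift $\|\dot{\subscr{x}{r}}\|$ and verifying that the residual $p$- and $\eps$-couplings are of sufficiently high order in $\eps$ to be absorbed into the stated $\delta$ coefficients rather than spoiling their form.
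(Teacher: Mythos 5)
Your architecture coincides with the paper's in all the essential ingredients: the shift $y = z - z^*(x,w_z)$, the weak-pairing inequality $D^+\|x-\subscr{x}{r}\|_x \le -c_f\|x-\subscr{x}{r}\|_x + \ell_{f_z}\|y\|_z + \ell_{f_\eps}\eps$, the recombination of the $\ell_{f_z}\|y\|_z$ part of $\|\dot x\|_x$ inside the manifold-correction term into the effective fast rate $c_y = c_g\eps^{-1} - \frac{\ell_{g_x}}{c_g}\ell_{f_z}$, the velocity-contraction estimate $D^+\|\subscr{f}{red}\|_x \le -c_f\|\subscr{f}{red}\|_x + \ell_{f_t}+\ell_{f_z}\frac{\ell_{g_w}}{c_g}\bar w_z + \ell_{f_w}\bar w_x$ for the persistent drift, and the final triangle inequality $\|z - z^*(\subscr{x}{r},w_z)\|_z \le \|y\|_z + \frac{\ell_{g_x}}{c_g}\|x-\subscr{x}{r}\|_x$ via Lemma~\ref{lem:lipz}. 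These are exactly the steps of Lemma~\ref{lem:bound-y} and of the two parts of the paper's proof.

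The gap is in how you close the coupled inequalities. You replace the drift $\|f(t,x,z^*(x,w_z),w_x,0)\|_x$, evaluated at the \emph{actual} $x(t)$, by $\|\dot{\subscr{x}{r}}\|_x$ plus ``corrections proportional to $p$,'' i.e.\ a term $\Lip_x(\subscr{f}{red})\,p$, and then claim the resulting $2\times 2$ positive comparison matrix is Hurwitz \emph{exactly} when $\eps<\eps_0^*$. That is not correct: the $p$-term enters the fast equation multiplied by $\eps\frac{\ell_{g_x}}{c_g}$ and is then divided by $\eps$, so the $(2,1)$ entry of the comparison matrix is $\frac{\ell_{g_x}}{c_g}\Lip_x(\subscr{f}{red})$, which is $O(1)$ in $\eps$ rather than higher order, and cannot be ``absorbed'' into the $\delta$'s. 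Hurwitzness of the coupled matrix then requires $c_f\,c_y > \ell_{f_z}\frac{\ell_{g_x}}{c_g}\Lip_x(\subscr{f}{red})$, a strictly smaller threshold than $\eps_0^*$ that moreover drags $\ell_{f_x}$ (through $\Lip_x(\subscr{f}{red})\le \ell_{f_x}+\ell_{f_z}\ell_{g_x}/c_g$) into the constants, and $\ell_{f_x}$ appears nowhere in~\eqref{eq:thm1-deltas}. The paper avoids the coupling altogether: in Lemma~\ref{lem:bound-y} the drift $\|\subscr{f}{red}(t,x,w_x,w_z)\|_x$ is bounded by its \emph{own} Gr\"onwall estimate along the flow (the velocity bound in the style of the cited Theorem~3.9), so the fast inequality is self-contained in $\|y\|_z$, the system is triangular, the closure is two successive scalar Gr\"onwall applications, and $\eps<\eps_0^*$ is needed only for $c_y>0$. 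You need this triangularization, or an equivalent device that keeps the $p$-coupling out of the fast row, to obtain the stated threshold and constants.
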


The proofs of the two theorems are presented in  Section~\ref{sec:proofs}. In the proof for Theorem~\ref{thm:closeness-of-solns-general}, we provide explicit expressions for the functions $E_x(t,\eps)$ and $E_z(t,\eps)$ (which are not reported in the theorem for simplicity of exposition).

Theorem~\ref{thm:closeness-of-solns-general} establishes that  the trajectory $x(t)$ of the slow sub-system~\eqref{eq:gen-system-a}  approaches the solution $\subscr{x}{r}(t)$ of the reduced model~\eqref{eq:gen-reduced} and then remains close to $\subscr{x}{r}(t)$ within a ball proportional to $\eps$. 
Similarly, the trajectory $z(t)$ of the fast sub-system~\eqref{eq:gen-system-b} approaches its quasi-steady-state $z^*(\subscr{x}{r}(t), w_z(t))$ and  
then remains within a ball  proportional to $\eps$. In the two bounds of~\eqref{eq:thm-close-solns-general}, the effects of the disturbances are identifiable.
The results of Theorem~\ref{thm:closeness-of-solns-general} are the counterpart of~\cite[Thm.~11.2]{HKK:02}, in which the closeness of the trajectories of the shifted system~\eqref{eq:gen-new-var-w} and the limiting system  is analyzed by leveraging  Lyapunov theory; see also~\cite{kokotovic1976singular} and~\cite{FCH:66}. 

% Theorem~\ref{thm:str-contractive-whole-system-general} provides sufficient
% conditions for the shifted system~\eqref{eq:gen-new-var-w} to be strongly
% infinitesimally contracting and specifies its contraction rate. Contractivity is obtained for sufficiently
% small values of $\eps$, depending on the Lipschitz constants of the vector
% fields and the time-variability of the disturbances.

We now provide a few remarks and comments.

\vspace{.1cm}

\begin{remark}[Comparison with Lyapunov stability analysis]
\label{rem:comp_Lyap}
Classical results on the infinite time interval based on  Lyapunov theory rely  on the assumption that the reduced model and the  boundary layer system have equilibria that are (locally) exponentially stable; see, for example, ~\cite[Thm.~11.2]{HKK:02} and~\cite{FCH:66}. Similar assumptions are used to establish exponential stability of equilibria of the singularly perturbed system as  in~\cite[Thm.~11.4]{HKK:02}. 
In our analysis, we replace the exponential stability assumptions with contractivity properties
on the fast dynamics and reduced model via Assumptions~\ref{as:fast-str-contracting} and~\ref{as:red-str-contracting}, respectively. 
We also provide explicit bounds on the time-scale inducing parameter $\eps$, whereas the Lyapunov-based analysis yields only an existence of a bound on $\eps$.
Our assumptions allow us to directly compare
the fast trajectory $z(t)$ with its steady state $z^*(x_\text{r}(t), w_z(t))$ for all $t \geq 0$ in Theorem~\ref{thm:closeness-of-solns-general} instead 
of only after an initial time interval as in~\cite{FCH:66,HKK:02}. 

%Regarding Theorem~\ref{thm:str-contractive-whole-system-general}, our bounds on $\eps$ can be computed from known system properties.
%In contrast, the bounds on $\eps$ via Lyapunov analysis depends on parameters of the selected Lyapunov functions associated with the reduced model and the boundary layer system~\cite[Ch.~7]{PVK-HKK-JOR:99}. 
%
Finally, it is important to note a limitation in our model: 
our results are applicable to two-time scale systems where the vector field $g(x,z,w_z,\eps)$ is not dependent on time. While this is aligned with~\cite{DDV-JJES:13}, the Lyapunov-based analysis can handle time-varying fast sub-systems~\cite[Ch.~7]{PVK-HKK-JOR:99},~\cite[Ch.~11]{HKK:02}. 
\hfill $\triangle$
\end{remark}

\vspace{.1cm}

\begin{remark}[Comparison with two-time scale contractivity] %~\cite{DDV-JJES:13,DDV-JJES:11}]
\label{rem:diff_slotine}
     Our contribution relative to the seminal works~\cite{DDV-JJES:13,DDV-JJES:11} is threefold. 
    First, our analysis considers dynamics of the form of the two-time scale system~\eqref{eq:gen-system}, where  disturbances $t \mapsto w_x(t)$ and $t \mapsto w_z(t)$ enter the slow and fast sub-systems, respectively, and the parameter $\eps$ is an argument of both of the dynamics in the two-time scale system~\eqref{eq:gen-system}. 
    On the other hand, the system in~\cite{DDV-JJES:13} is of the form $\dot x = f(t, x, z)$, $\eps \dot z = g(x, z, \eps)$.

    % \red{[assumption in lem 2 is not satisfied for LTI. moreover, the assumptions of th 3 show that this assumption is not required due to additionally assumign that f is contractive.]}
    Second, the results pertaining to the closeness of solutions in~\cite{DDV-JJES:13} are derived under stronger assumptions than ours. 
    In Lemma~2 of \cite{DDV-JJES:13} the assumptions are: (a1) the fast sub-system is contractive, and (a2) the quantity $\|\frac{\partial z^*}{\partial x} f(t,x,z,w_x,0)\|_x$ is uniformly bounded in all of its arguments. 
    For our computations, we do not need (a2) to compute a comparable bound, which is given as an intermediary result for the proof of Theorem~\ref{thm:closeness-of-solns-general} in Lemma~\ref{lem:bound-y} of Section~\ref{sec:proofs}. 
    Although (a2) allows one to leverage robustness properties of contracting systems~\cite[Lem. 1]{DDV-JJES:13}, (a2) may not be satisfied for several systems, including LTI systems.
    Next, Theorem~3 of~\cite{DDV-JJES:13} requires contractivity of the original fast and slow sub-systems, which is stronger than our assumptions that the reduced model and fast sub-system are contractive. Our assumptions more closely mirrors the Lyapunov-based approach; see Remark~\ref{rem:comp_Lyap} for a detailed discussion on this. 
    Moreover, contractivity of both the fast and slow sub-systems are in general not satisfied in the OFO example motivated earlier~\ref{sec:extensions-and-applications}. %examples we present in Section~\ref{sec:extensions-and-applications}.
  %  Finally, our Theorem~\ref{thm:str-contractive-whole-system-general} provides explicit conditions for the contractivity of the shifted dynamics based on system properties such as Lipschitz constants and bounds on the disturbances. In contrast, the conditions in~\cite{DDV-JJES:11}
   % are in terms of the largest singular value of the generalized Jacobian matrix of the two-time scale system. 
    \hfill $\triangle$
\end{remark}

% \vspace{.1cm}

\subsection{Autonomous Two-Time Scale Dynamical Systems}
\label{sec:simplified-nonlinear-dynamics}
% \textcolor{red}{In this section, ... [work out general case from before...]}
In this section, we specialize the results of the general setting in Section~\ref{sec:general-nonlinear-dynamics-with-disturbances}.
We consider the \textit{two-time scale autonomous system},
\begin{subequations}\label{eq:aut-gen}
    \begin{align}\label{eq:aut-gen-f}
        \dot x &= f(x, z), \hspace{1.0cm} x(0) = x_0\\
        \eps \dot z &= g(x, z) , \hspace{1.0cm} z(0) = z_0 
        \label{eq:aut-gen-g}
    \end{align}
\end{subequations}
where the vector fields $f: \R^{n_x} \times \R^{n_z} \to \R^{n_x}$, $g: \R^{n_x} \times \R^{n_z} \to \R^{n_z}$ are continuous in their arguments. % and $f$ is locally Lipschitz in $x \in \R^{n_x}$. 
Here, $\eps$ is a parameter that enforces a time-scale separation, but is not an argument of the vector fields.    
As before, assume there exists (possibly different) norms $\|\cdot\|_{x}$ and $\|\cdot\|_{z}$ with compatible weak pairings $\llbracket \cdot ; \cdot \rrbracket_{x}$ and $\llbracket \cdot ; \cdot \rrbracket_{z}$ 
on $\R^{n_x}$ and $\R^{n_z}$, respectively.
In the following, we simplify 
the results in Section~\ref{sec:general-nonlinear-dynamics-with-disturbances} 
to analyze the two-time scale autonomous system~\eqref{eq:aut-gen}. 
We start by customizing Assumptions~\ref{as:fast-str-contracting}-\ref{as:Lipschitz-interconnection-x-z} 
% and~\ref{as:extra-regularity}
used in Section~\ref{sec:general-nonlinear-dynamics-with-disturbances}.

\vspace{.1cm}

\begin{assumption}[Contractivity of the fast dynamics]
\label{as:aut-fast-str-contracting}
~

\noindent
    There exists $c_g > 0$ such that $\osL_z(g(x,z)) \leq -c_g$ holds uniformly in $x.$     \hfill $\triangle$
\end{assumption}

\vspace{.1cm}

\begin{assumption}[Continuity of the slow dynamics]
\label{as:aut-Lipschitz-f-in-t}
~

\noindent
There exists $\ell_{f_x} \geq 0$ such that $\Lip_x(f(x,z)) \leq \ell_{f_x}$ holds uniformly in $z$.  
\hfill $\triangle$ 
\end{assumption}

\vspace{.1cm}

\begin{assumption}[Lipschitz interconnection]
\label{as:aut-Lipschitz-interconnection-x-z}
~

\noindent
There exists $\ell_{f_z} > 0$ such that $\Lip_z(f(x,z)) \leq \ell_{f_z}$ holds uniformly in $x$.  There exists $\ell_{g_x} > 0$ such that $\Lip_x(g(x,z)) \leq \ell_{g_x}$ holds uniformly in $z.$ \hfill $\triangle$
\end{assumption}

%%%%%%%% REMOVING BC WE DON'T NEED ANYMORE: 
% \vspace{.1cm}

% \begin{assumption}[Differentiability of $z^*(x)$]
% \label{as:extra-regularity_auto} 
% ~

% \noindent
% For given $x \in \R^{n_x}$ and $\eps \geq 0$, let $z^*(x)$ be the unique solution to the  equation $g(x,z^*(x)) = 0$. The map  $x \mapsto z^*(x)$ is globally differentiable. \hfill $\triangle$
% \end{assumption}

\vspace{.1cm} 

We reiterate that
Assumptions~\ref{as:aut-fast-str-contracting}-\ref{as:aut-Lipschitz-interconnection-x-z}
ensure that solutions of the two-time scale autonomous system~\eqref{eq:aut-gen} exist and are unique, with
Assumption~\ref{as:aut-fast-str-contracting} further implying that, for any
$\eps > 0$,~\eqref{eq:aut-gen-g} is strongly infinitesimally contractive
with rate $c_g / \eps$ in the domain $t$. 
We rewrite the two-time scale autonomous system~\eqref{eq:aut-gen} in the new variables to define the \textit{shifted autonomous dynamics}, given by: 
\begin{subequations}
\label{eq:aut-new-var-w}
    \begin{align}\label{eq:aut-new-var-w-a}
        \dot x &= f(x, y + z^*(x)) \\
            \eps \dot y &= g(x, y + z^*(x)) - \eps \frac{\partial z^*}{\partial x}(x) f(x, y + z^*(x)) \label{eq:aut-new-var-w-b}
    \end{align}
\end{subequations}
with initial conditions $x(0) = x_0$ and $y(0) = z_0 - z^*(x_0)$, respectively. 
The \textit{reduced autonomous model} is given by:
    \begin{align}\label{eq:aut-reduced}
        \dot x_{\text{r}} &= f(x_{\text{r}}, z^*(x_{\text{r}})), ~~~~~~~ x_{\text{r}}(0) = x(0) = x_{0}
    \end{align}
where the fast variable $z$ is substituted by the quasi-steady state $z^*(x_{\text{r}})$. 
We impose the following assumption on the reduced autonomous model~\eqref{eq:aut-reduced}.

\vspace{.1cm}

\begin{assumption}[Contractivity of the reduced aut. model]
\label{as:red-str-contracting-aut}
%     The reduced autonomous model~\eqref{eq:aut-reduced} is strongly infinitesimally contracting with rate $c_f > 0$; that is, $\osL_x(f(x, z^*(x))) \leq -c_f$. \hfill $\triangle$ 
There exists $c_f > 0$ such that $\osL_x(f(z^*(x))) \leq -c_f$. \hfill $\triangle$ 
\end{assumption}

\vspace{.1cm}

The following is the simplified counterpart to Theorem~\ref{thm:closeness-of-solns-general}.   

\begin{corollary}[Closeness of the solutions]
\label{thm:closeness-of-solns-aut}
Consider the two-time scale autonomous system~\eqref{eq:aut-gen} satisfying (contractivity) Assumptions~\ref{as:aut-fast-str-contracting} and~\ref{as:red-str-contracting-aut}, and (continuity) Assumptions~\ref{as:aut-Lipschitz-f-in-t}-\ref{as:aut-Lipschitz-interconnection-x-z}. 
%%%%%%%% REMOVING BC WE DON'T NEED ANYMORE: 
%, 
% and (differentiability of $z^*$) Assumption~\ref{as:extra-regularity_auto}.  
Suppose that, 
\begin{align}
\label{eq:a0esstart}
0 < \eps < \eps^*_{0,\text{as}} =  \frac{c_g^2}{\ell_{g_x} \ell_{f_z}}.
\end{align}
Then, for any initial condition, the two-time scale autonomous system~\eqref{eq:aut-gen} has a unique solution $x(t), z(t)$, for $t \geq 0$, that satisfies the bounds
\allowdisplaybreaks
\begin{subequations}\label{eq:thm-close-solns-aut}
\begin{align}
        \|x(t) - \subscr{x}{r}(t)\|_x &\leq  G_x(t) \label{eq:thm-close-solns-aut-a}, \\
        \|z(t)  - z^*(\subscr{x}{r}(t))\|_z &\leq G_z(t),
     \label{eq:thm-close-solns-aut-b} 
\end{align}
\end{subequations}
where $t \mapsto \subscr{x}{r}(t)$ is a solution of the reduced autonomous model~\eqref{eq:aut-reduced} 
and the maps $t \mapsto G_x(t)$ and $t \mapsto G_z(t)$ 
satisfy $G_x(t) \rightarrow 0$, $G_z(t) \rightarrow 0$ for $t \rightarrow +\infty$. 
\QEDB
\end{corollary}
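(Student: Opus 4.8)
The plan is to obtain Corollary~\ref{thm:closeness-of-solns-aut} as a direct specialization of Theorem~\ref{thm:closeness-of-solns-general}, by viewing the autonomous system~\eqref{eq:aut-gen} as an instance of the general two-time scale system~\eqref{eq:gen-system} in which the vector fields carry no dependence on $t$, $w_x$, $w_z$, or $\eps$, and the exogenous inputs are taken constant. First I would set up this embedding: identify $f(x,z)$ with $f(t,x,z,w_x,\eps)$ and $g(x,z)$ with $g(x,z,w_z,\eps)$, and choose $w_x, w_z$ constant (hence locally absolutely continuous), so that Assumption~\ref{as:extra-regularity0} holds and $\bar w_x = \bar w_z = 0$.

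Next I would check that the autonomous hypotheses imply all the hypotheses of Theorem~\ref{thm:closeness-of-solns-general}. Assumptions~\ref{as:aut-fast-str-contracting}, \ref{as:aut-Lipschitz-f-in-t}, and~\ref{as:aut-Lipschitz-interconnection-x-z} coincide term-by-term with Assumptions~\ref{as:fast-str-contracting}, \ref{as:Lipschitz-f-in-t}, and~\ref{as:Lipschitz-interconnection-x-z} once the redundant ``uniformly in $w_x, w_z, \eps, t$'' clauses are dropped, since $f, g$ do not depend on those arguments. Because $z^*(x,w_z) = z^*(x)$ is independent of $w_z$ here, the reduced model~\eqref{eq:gen-reduced} collapses to~\eqref{eq:aut-reduced}, so Assumption~\ref{as:red-str-contracting-aut} gives exactly Assumption~\ref{as:red-str-contracting}. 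The only remaining hypothesis, the parameter-continuity Assumption~\ref{as:Lipschitz-interconnection-else}, is satisfied trivially with all five constants set to zero, i.e.\ $\ell_{f_t} = \ell_{f_w} = \ell_{f_\eps} = \ell_{g_w} = \ell_{g_\eps} = 0$. Finally, the threshold in~\eqref{eq:a0esstart} is literally the threshold $\eps_0^*$ of the theorem, so the standing hypothesis $0 < \eps < \eps^*_{0,\text{as}}$ is exactly $0 < \eps < \eps_0^*$.

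With the embedding verified, I would invoke Theorem~\ref{thm:closeness-of-solns-general} to obtain existence and uniqueness of the solution together with the two bounds in~\eqref{eq:thm-close-solns-general}, and then simplify their right-hand sides. Substituting $\ell_{f_t} = \ell_{f_w} = \ell_{f_\eps} = \ell_{g_w} = \ell_{g_\eps} = 0$ and $\bar w_x = \bar w_z = 0$ into the definitions~\eqref{eq:thm1-deltas}, every constant $\delta_{x,i}$ and $\delta_{z,i}$ vanishes: each of $\delta_{x,1}, \delta_{x,2}, \delta_{z,2}$ contains a factor $\ell_{f_t}$, $\ell_{f_\eps}$, or $\ell_{g_\eps}$, while $\delta_{x,3}, \delta_{x,4}, \delta_{z,3}, \delta_{z,4}$ are each multiplied by $\bar w_z$ or $\bar w_x$ (and in addition contain a factor $\ell_{g_w}$ or $\ell_{f_w}$). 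Consequently both bounds collapse to the decaying remainder terms, and I would set $G_x(t) := E_x(t,\eps)$ and $G_z(t) := E_z(t,\eps)$, which tend to $0$ as $t \to +\infty$ by the theorem.

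The one point requiring care --- and the main obstacle --- is the constant $\delta_{z,1} := \frac{c_f \ell_{f_\eps}}{\ell_{f_w}}\delta_{z,4}$, whose defining expression is an indeterminate $0/0$ form once $\ell_{f_\eps} = \ell_{f_w} = 0$. I would resolve this by first substituting the explicit expression for $\delta_{z,4}$, which carries a factor $\ell_{f_w}$ in its numerator; this $\ell_{f_w}$ cancels the denominator, leaving $\delta_{z,1} = \frac{\ell_{f_\eps}\ell_{g_x}}{c_g}\bigl(1 + \frac{\ell_{g_x}\ell_{f_z}}{c_g}\bigr)$, which is genuinely zero because $\ell_{f_\eps} = 0$. (Operationally, $\delta_{z,1}$ measures the sensitivity of $z^*$ and of the shifted dynamics to $\eps$, which is absent in the autonomous setting.) If sharper, fully explicit transient forms of $G_x(t)$ and $G_z(t)$ are desired rather than the inherited $E_x, E_z$, I would instead re-run the Gr\"onwall-type estimates underlying the proof of Theorem~\ref{thm:closeness-of-solns-general} directly on the shifted autonomous system~\eqref{eq:aut-new-var-w}, where the absence of disturbance forcing makes the slow error $\|x(t) - \subscr{x}{r}(t)\|_x$ and the fast error decay as clean exponentials at rates governed by $c_f$ and $c_g/\eps$.
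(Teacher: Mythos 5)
Your proposal is correct, and it reaches the corollary by a route that is procedurally different from the paper's. The paper states that its (omitted) proof ``closely follows that of Theorem~\ref{thm:closeness-of-solns-general}'', i.e.\ it re-runs the Gr\"onwall estimates of Lemma~\ref{lem:bound-y} and Section~\ref{sec:proofs} directly on the shifted autonomous system~\eqref{eq:aut-new-var-w}; this is how it obtains the explicit, case-split expressions for $G_x(t)$ and $G_z(t)$ displayed after the corollary, in which every forcing term ($\delta$, $\delta_y$, the $\bar w$'s) is absent from the outset. You instead invoke Theorem~\ref{thm:closeness-of-solns-general} as a black box, embedding~\eqref{eq:aut-gen} into~\eqref{eq:gen-system} with constant disturbances and $\ell_{f_t}=\ell_{f_w}=\ell_{f_\eps}=\ell_{g_w}=\ell_{g_\eps}=0$, and observe that all eight constants in~\eqref{eq:thm1-deltas} vanish so that only $E_x(t,\eps)$ and $E_z(t,\eps)$ survive. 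This is a legitimate and economical derivation of the corollary \emph{as stated} (existence of vanishing envelopes $G_x,G_z$), and you correctly flag and resolve the one genuine hazard, the $0/0$ form in $\delta_{z,1}=\frac{c_f\ell_{f_\eps}}{\ell_{f_w}}\delta_{z,4}$, by cancelling the $\ell_{f_w}$ hidden in $\delta_{z,4}$ before taking the limit; you also correctly note that the constant maps $w_x,w_z$ satisfy Assumption~\ref{as:extra-regularity0} and that~\eqref{eq:gen-reduced} collapses to~\eqref{eq:aut-reduced} so that Assumption~\ref{as:red-str-contracting-aut} supplies Assumption~\ref{as:red-str-contracting}. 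What the black-box route does not buy is the paper's explicit transient formulas: setting $G_x:=E_x$ and $G_z:=E_z$ yields decaying bounds but not literally the expressions printed after the corollary (which come from the cleaner autonomous Gr\"onwall computation and are slightly sharper). Your closing remark that one would re-run the estimates on~\eqref{eq:aut-new-var-w} to get those sharper forms is exactly the paper's intended proof, so the two approaches are complementary rather than in conflict.
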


\vspace{.1cm}

The bounds~\eqref{eq:thm-close-solns-aut} demonstrate asymptotic convergence of the trajectories of the two-time scale autonomous dynamics~\eqref{eq:aut-gen} to the trajectories of its limiting system; that is, $\|x(t) - \subscr{x}{r}(t)\|_x \rightarrow 0$ and  $\|z(t) - z^*(\subscr{x}{r}(t))\|_z \rightarrow 0$ as $t \rightarrow + \infty$. This is in contrast to Theorem~\ref{thm:closeness-of-solns-general},  where the trajectory of the original dynamics converge within a ball centered on $\subscr{x}{r}(t)$ and $z^*(\subscr{x}{r}(t))$. 
In the following, we provide the expressions for $G_x(t)$ and $G_z(t)$, which are derived for two different cases. To this end, we first define the positive constant $c_y := c_g \eps^{-1} - \frac{\ell_{g_x}}{c_g} \ell_{f_z}$.

\noindent \emph{Case~1}: $c_f \neq c_y$. 
\begin{align*}
    &G_x(t) : = \ell_{f_z} \|z(0) - z^*(x(0))\|_z \frac{e^{-c_y t} - e^{-c_f t}}{c_f - c_y} \\
    & \quad\quad\quad\quad\quad\quad+ \ell_{f_z} \frac{\ell_{g_x}}{c_g} \| f(x(0), z^*(x(0))\|_x \Big( \frac{t e^{-c_f t}}{c_y - c_f} + \frac{ e^{-c_f t} - e^{-c_y t} }{(c_f - c_y)^2} \Big), \\
    &G_z(t) : = e^{-c_y t} \|z(0) - z^*(x(0))\|_z  +\frac{\ell_{g_x}}{c_g} G_x(t) \\
    & \quad\quad\quad\quad\quad\quad+ \frac{\ell_{g_x}}{c_g} \|f(x(0), z^*(x(0)))\|_x \Big( \frac{e^{-c_f t} - e^{-c_y t}}{c_y - c_f} \Big).
\end{align*}

\noindent \emph{Case~2}: $c_f = c_y$.  
\begin{align*}
    &G_x(t) := \ell_{f_z} \big(\|z(0) - z^*(x(0))\|_z t e^{-c_f t} + \frac{1}{2} t^2 e^{-c_f t}\big)\\
    &G_z(t) := e^{-c_y t} \|z(0) - z^*(x(0))\|_z  + \frac{\ell_{g_x}}{c_g}\|f(x(0), z^*(x(0))\|_x t e^{-c_y t} 
    + \frac{\ell_{g_x}}{c_g} G_x(t).
\end{align*}

The proof for Corollary~\ref{thm:closeness-of-solns-aut}  closely follows that of Theorem~\ref{thm:closeness-of-solns-general} given in Section~\ref{sec:proofs} and is omitted due to space limitations.

\section{Proofs}\label{sec:proofs}

We first introduce some additional notation and preliminary results used throughout this section. 
Throughout our proofs, let the norms $\| \cdot \|_x$ and $\| \cdot \|_z$ on $\R^{n_x}$ and $\R^{n_z}$ have compatible weak pairings $\llbracket \cdot ; \cdot \rrbracket_{x}$ and $ \llbracket \cdot ; \cdot \rrbracket_{z}$.

We let $\|\cdot\|$ be a norm with compatible weak pairing $\llbracket \cdot ; \cdot \rrbracket: \R^n \times \R^n \rightarrow \R$ on $\R^n$~\cite[Ch.~2]{FB:24-CTDS}. 
We recall the following two properties of weak pairings on normed spaces, which will be used in the proofs: 

\noindent \emph{(P1) The curve norm derivative formula:} 
for every differentiable curve $x: (a,b) \rightarrow \R^n$ and for almost every $t \in (a,b)$, $\|x(t)\| D^+ \|x(t)\| = \llbracket \dot x(t) ; x(t) \rrbracket$.

\noindent \emph{(P2) Sub-additivity and continuity of first argument:} $\llbracket x_1 + x_2 ; y \rrbracket \leq \llbracket x_1 ; y \rrbracket + \llbracket x_2 ; y \rrbracket$ for all $x_1, x_2, y \in \R^n$, where $\llbracket \cdot ; \cdot \rrbracket$ is continuous in its first argument. 

\noindent \emph{(P3) Cauchy-Swartz Inequality:}
$\lvert \llbracket x; y \rrbracket \lvert \leq \llbracket x ; x \rrbracket^{1/2} \llbracket y ; y \rrbracket^{1/2} $ for all $x,y \in \R^n$.
% Our proofs will leverage the following lemma about the unique solution to $g(x,z,w_z,\eps) = 0$. 
% \vspace{.1cm} 
% \begin{lemma}[Lipschitz solutions to $g(x,z,w_z,\eps) = 0$]
% \label{lem:lipz}
% ~
% \noindent
% Let Assumptions~\ref{as:fast-str-contracting} and~\ref{as:Lipschitz-interconnection-else} hold, and let $z^*(x, w_z,\eps)$ be the unique solution to 
%     $g(x,z,w_z,\eps) = 0$ for given $x, w_z$ and $\eps$. Then, the following holds: 
% \noindent \emph{i)} for any fixed $w_z$ and $\eps$, the map $x \mapsto z^*(x,w_z,\eps)$ is Lipschitz with constant $\frac{\ell_{g_x}}{c_g}$. 
% \noindent \emph{ii)} For any fixed $x$ and $\eps$, the map $w_z \mapsto z^*(x, w_z,\eps)$ is Lipschitz with constant $\frac{\ell_{g_w}}{c_g}$. 
% \noindent \emph{iii)} For any fixed $x$ and $w_z$, the map $\eps \mapsto z^*(x, w_z,\eps)$ is Lipschitz with constant $\frac{\ell_{g_\eps}}{c_g}$. 
% \QEDB
% \end{lemma}
% \vspace{.1cm} 
% The Lipschitz constants in Lemma~\ref{lem:lipz} can be derived using the arguments of~\cite[Lem. 1]{AD-VC-AG-GR-FB:23f}. The next lemma focuses on the shifted dynamics~\eqref{eq:gen-new-var-w-b}. 
% \vspace{.1cm} 
%\textcolor{red}{fix questions}

%\noindent
Lemma~\ref{lem:bound-y} focuses on the shifted dynamics~\eqref{eq:gen-new-var-w-b}.
\begin{lemma}[Transient bound for $\|y(t)\|_z$]\label{lem:bound-y}
Consider the fast sub-system~\eqref{eq:gen-new-var-w-b} in the shifted variable $y = z - z^*(x,w_z)$, 
satisfying Assumptions~\ref{as:fast-str-contracting} and~\ref{as:Lipschitz-interconnection-x-z}. Let $\eps$ be such that $0<\eps < c_g^2 / (\ell_{g_x} \ell_{f_z})$. Then, any unique solution $t \mapsto y(t)$ of~\eqref{eq:gen-new-var-w-b} satisfies, 
\begin{align}
    \|y(t)\|_z \leq e^{-c_y t} \|y(0)\|_z +  \frac{\delta}{c_y} + E_y(t, \eps), ~~~~ t \geq 0,
\end{align}
with $\delta_y := \frac{\ell_{g_x}}{c_g c_f} \big( \ell_{f_t} + \ell_{f_z} \frac{\ell_{g_w}}{c_g} \bar w_z + \ell_{f_w} \bar w_x \big)$,  $\delta := \ell_{g_\eps} + \frac{\ell_{g_w}}{c_g} \bar w_z + \eps \frac{\ell_{g_x}}{c_g} \ell_{f_\eps}  + \delta_y$, $c_y := c_g \eps^{-1} - \frac{\ell_{g_x}}{c_g} \ell_{f_z}$, and where  
the map $E_y(t, \eps) \to 0$ as $t \to \infty$. Moreover,  we have that:
\begin{align}\label{eq:lim-sup-y}
    \limsup_{t \to \infty} \|y(t)\|_z \leq \frac{\delta}{c_y}. 
\end{align} 
\hfill $\square$
\end{lemma}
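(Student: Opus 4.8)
The plan is to reduce the lemma to a scalar differential inequality of the form $D^+\|y(t)\|_z \le -c_y\|y(t)\|_z + \delta(t)$, with $\delta(t)\to\delta$, and then integrate it with a Gr\"onwall-type comparison argument. First I would invoke the curve norm derivative formula (P1) together with the positive homogeneity of the weak pairing in its first argument to write, for almost every $t$, $\|y\|_z\, D^+\|y\|_z = \llbracket \dot y; y\rrbracket_z = \tfrac1\eps\llbracket \eps\dot y; y\rrbracket_z$. Substituting the shifted fast dynamics~\eqref{eq:gen-new-var-w-b} for $\eps\dot y$ and recalling that differentiating $z^*(x,w_z)$ along the trajectory produces $\tfrac{d}{dt}z^*(x,w_z) = \tfrac{\partial z^*}{\partial x}\dot x + \tfrac{\partial z^*}{\partial w_z}\dot w_z$, I would use sub-additivity (P2) to split $\eps\dot y$ into four pieces paired against $y$.

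The four pieces are bounded as follows. The increment $g(x,y+z^*,w_z,\eps) - g(x,z^*,w_z,\eps)$, by the fast-contractivity Assumption~\ref{as:fast-str-contracting}, contributes $\llbracket\cdot;y\rrbracket_z \le -c_g\|y\|_z^2$. The residual $g(x,z^*,w_z,\eps)$ equals $g(x,z^*,w_z,\eps) - g(x,z^*,w_z,0)$ since $g(x,z^*,w_z,0)=0$ by the definition of $z^*$, so Cauchy--Schwarz (P3) and Assumption~\ref{as:Lipschitz-interconnection-else}.v bound it by $\ell_{g_\eps}\eps\|y\|_z$. The disturbance-velocity term $\eps\tfrac{\partial z^*}{\partial w_z}\dot w_z$ is bounded by $\eps\tfrac{\ell_{g_w}}{c_g}\bar w_z\|y\|_z$ using Lemma~\ref{lem:lipz}(ii) and Assumption~\ref{as:extra-regularity0}. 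The delicate coupling term $\eps\tfrac{\partial z^*}{\partial x}f$ is handled by combining $\|\tfrac{\partial z^*}{\partial x}\|\le\tfrac{\ell_{g_x}}{c_g}$ (Lemma~\ref{lem:lipz}(i)) with the Lipschitz splitting $\|f(t,x,y+z^*,w_x,\eps)\|_x \le \|f(t,x,z^*,w_x,\eps)\|_x + \ell_{f_z}\|y\|_z$ (Assumption~\ref{as:Lipschitz-interconnection-x-z}). The $\ell_{f_z}\|y\|_z$ part yields a $+\tfrac{\ell_{g_x}}{c_g}\ell_{f_z}\|y\|_z^2$ contribution that, after dividing by $\eps$, merges with the contraction term and converts $c_g/\eps$ into exactly $c_y = c_g\eps^{-1} - \tfrac{\ell_{g_x}}{c_g}\ell_{f_z}$; the standing hypothesis $\eps < c_g^2/(\ell_{g_x}\ell_{f_z})$ is precisely what guarantees $c_y>0$. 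Collecting the first-order-in-$\|y\|_z$ terms and cancelling $\|y\|_z$ gives $D^+\|y\|_z \le -c_y\|y\|_z + \ell_{g_\eps} + \tfrac{\ell_{g_w}}{c_g}\bar w_z + \tfrac{\ell_{g_x}}{c_g}\|f(t,x,z^*(x,w_z),w_x,\eps)\|_x$.

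The hard part is bounding the slow-dynamics velocity $\|f(t,x,z^*(x,w_z),w_x,\eps)\|_x$ appearing in the forcing. I would first peel off the $\eps$-dependence via Assumption~\ref{as:Lipschitz-interconnection-else}.iii, $\|f(t,x,z^*,w_x,\eps)\|_x \le \|f_\text{red}(t,x,w_x,w_z)\|_x + \ell_{f_\eps}\eps$, and then bound $\|f_\text{red}(t,x(t),w_x,w_z)\|_x$ by a secondary contraction argument: differentiating the curve $t\mapsto f_\text{red}(t,x(t),w_x(t),w_z(t))$ and reapplying the curve norm derivative formula with the reduced-model contractivity $\osL_x(f_\text{red})\le -c_f$ (Assumption~\ref{as:red-str-contracting}), while bounding the explicit partial $\|\partial_t f_\text{red}\|_x \le \ell_{f_t} + \ell_{f_z}\tfrac{\ell_{g_w}}{c_g}\bar w_z + \ell_{f_w}\bar w_x$ through Assumptions~\ref{as:Lipschitz-f-in-t}--\ref{as:Lipschitz-interconnection-else} and Lemma~\ref{lem:lipz}(ii). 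This delivers the asymptotic velocity estimate $\limsup_t\|f_\text{red}\|_x \le \tfrac1{c_f}\big(\ell_{f_t} + \ell_{f_z}\tfrac{\ell_{g_w}}{c_g}\bar w_z + \ell_{f_w}\bar w_x\big)$, so that $\tfrac{\ell_{g_x}}{c_g}\|f(t,x,z^*,w_x,\eps)\|_x$ approaches $\eps\tfrac{\ell_{g_x}}{c_g}\ell_{f_\eps} + \delta_y$ and the total forcing approaches $\delta$. The genuine obstacle is that $x(t)$ here is the \emph{actual} slow trajectory, not the reduced one, so $\dot x = f_\text{red} + \mathcal{O}(\|y\|_z) + \mathcal{O}(\eps)$; the resulting feedback of $\|y\|_z$ into the velocity estimate must be shown to affect only the transient, which I would absorb into a decaying remainder.

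Finally, writing $\delta(t)$ for the (time-varying) forcing and $\delta$ for its asymptotic value, the Gr\"onwall comparison lemma for upper Dini derivatives yields $\|y(t)\|_z \le e^{-c_y t}\|y(0)\|_z + \int_0^t e^{-c_y(t-s)}\delta(s)\,ds$. Splitting $\delta(s) = \delta + (\delta(s)-\delta)$ gives the constant part $\tfrac{\delta}{c_y}(1-e^{-c_y t}) \le \tfrac{\delta}{c_y}$, while the convolution of the exponentially decaying kernel with $\delta(s)-\delta \to 0$ defines $E_y(t,\eps)$ and tends to $0$ as $t\to\infty$; this produces the stated transient bound. Taking $\limsup_{t\to\infty}$ then yields~\eqref{eq:lim-sup-y}.
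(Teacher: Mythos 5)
Your proposal follows essentially the same route as the paper's proof: the curve norm derivative formula applied to $\|y\|_z$, the decomposition of $\eps\dot y$ into the contractive increment of $g$, the $\eps$-residual, the $\dot w_z$ term, and the coupling term $\eps\,\tfrac{\partial z^*}{\partial x}f$; the absorption of the $\ell_{f_z}\|y\|_z$ piece into the contraction rate to produce $c_y$; a secondary contraction estimate on $t\mapsto \subscr{f}{red}(t,x(t),w_x(t),w_z(t))$ with rate $c_f$; and a final Gr\"onwall comparison (split, in the paper, into the cases $c_f\neq c_y$ and $c_f=c_y$ to get the explicit $E_y$).

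The one place you go beyond the paper is the obstacle you flag in the secondary estimate, and it deserves a sharper treatment than ``absorb into a decaying remainder.'' You are right that $x(t)$ is the true slow trajectory, so $\dot x = \subscr{f}{red} + O(\ell_{f_z}\|y\|_z) + O(\ell_{f_\eps}\eps)$; the paper's proof silently substitutes $\dot x=\subscr{f}{red}$ in the chain rule for $\tfrac{d}{dt}\subscr{f}{red}$, so you have correctly identified a step the paper glosses over. However, the discrepancy is \emph{not} purely transient: the extra contribution to $D^+\|\subscr{f}{red}\|_x$ is proportional to $\|y(t)\|_z$ and to $\eps$, and since $\limsup_t\|y(t)\|_z$ is only $O(\eps)$ rather than zero, the correction persists for all time. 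Closing the argument therefore requires treating the differential inequalities for $\|y\|_z$ and $\|\subscr{f}{red}\|_x$ as a genuinely coupled system (a small-gain or vector-comparison step), which is feasible under the stated bound on $\eps$ but perturbs $\delta_y$ and $\delta$ by additional $O(\eps)$ terms. So your sketch matches the intended argument and even improves on its self-awareness, but the proposed mechanism for resolving the feedback of $\|y\|_z$ into the velocity estimate is the one step that, as written, would not go through.
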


\begin{proof}
We will compute a bound for 
$\|y(t)\|_z \eps D^+ \|y(t)\|_z = \|z(t) - z^*(x(t), w_z(t))\|_z$ $ \eps D^+ \|z(t) - z^*(x(t), w_z(t))\|_z,$ and then apply the Gr\"{o}nwall Inequality. First, use property (P1) of the weak pairing to compute 
    \begin{align*}
        \|z(t) - z^*(x(t), w_z(t))\|_z \eps D^+ &\|z(t) - z^*(x(t), w_z(t))\|_z \\
        & \quad \overset{\text{(P1)}}{=} \llbracket g(x, z, w_z, \eps) - \eps \frac{d}{dt} z^*(x, w_z); z - z^*(x, w_z)) \rrbracket_z.
    \end{align*}
    Add and subtract $g(x, z,w_z, 0)$,  subtract $g(x, z^*(x, w_z), w_z, 0)=0$ (for any given $w_z$), and then apply the properties (P2)-(P3) to obtain: 
    \begingroup
    \allowdisplaybreaks
    \begin{align*}
        \|z(t) - &z^*(x(t), w_z(t))\|_z \eps D^+  \|z(t) - z^*(x(t), w_z(t))\|_z
      \\
      &\quad  \leq \llbracket g(x, z,w_z, 0) - g(x, z^*(x, w_z),w_z, 0); z - z^*(x, w_z)\rrbracket_z  \\
      &  \quad\quad + \llbracket g(x, z, w_z, \eps) - g(x, z,w_z, 0) ; z - z^*(x, w_z) \rrbracket_z \\
      & \quad\quad + \eps \left\|\frac{d}{dt} z^*(x, w_z) \right\|_z \|z - z^*(x,w_z)\|_z \\
      & \quad\leq -c_g \|z - z^*(x, w_z)\|_z^2
      + \eps \Big( \ell_{g_\eps} + \left\|\frac{d}{dt} z^*(x, w_z) \right\|_z \Big)\|z - z^*(x, w_z)\|_z. 
    \end{align*}
    \endgroup
    Next, bound the norm of $\frac{d}{dt} z^*(x, w_z)$ as, 
    \begin{align*}
      \left\|\frac{d}{dt} z^*(x, w_z) \right\|_z &= \left\|\frac{\partial
        z^*}{\partial w_z} \dot w_z + \frac{\partial z^*}{\partial x} f(t,
      x, z, w_x, \eps) \right\|_z \\
      &\leq \frac{\ell_{g_w}}{c_g} \|\dot
      w_z\|_{w_z} + \frac{\ell_{g_x}}{c_g} \|f(t, x, z, w_x, \eps) \|_{x}.
    \end{align*}
    Consider
    \begin{align*}
        \|f(t, x, z, w_x, \eps) \|_x &\leq \|f(t, x, z, w_x, \eps) - f(t, x, z, w_x, 0)\|_x \\
        & \quad +  \|f(t, x, z, w_x, 0) - f(t, x, z^*(x, w_z), w_x, 0)\|_x \\
        &  \quad + \|f(t, x, z^*(x, w_z), w_x, 0)\|_x\\
        & \leq \ell_{f_\eps} \eps + \ell_{f_z} \|z - z^*(x, w_z)\|_z + \|f(t, x, z^*(x, w_z), w_x, 0)\|_x.
    \end{align*}
    Then, we use the proof of~\cite[Thm. 3.9]{FB:24-CTDS} to bound $\|f(t, x, z^*(x, w_z), w_x, 0)\|_x$.
    Before doing so, note that $\subscr{f}{red}(t, x, w_x, w_z) := f(t, x, z^*(x, w_z), w_x, 0)$ is Lipschitz in $w_z$ with constant $\ell_{f_z} \frac{\ell_{g_w}}{c_g}$. Compute
    \begin{align}
       \|\subscr{f}{red}&(t, x, w_x, w_z) \|_x D^+ \|\subscr{f}{red}(t, x, w_x, w_z)\|_x  \nonumber \\
      &\overset{\text{by (P1)}}{=} \left\llbracket\frac{d}{dt} \subscr{f}{red}(t, x, w_x, w_z) ; \subscr{f}{red}(t, x, w_x, w_z) \right\rrbracket_x \nonumber \\
       &~~~= \left\llbracket \frac{\partial \subscr{f}{red}}{\partial t} + \frac{\partial \subscr{f}{red}}{\partial x} \subscr{f}{red} + \frac{\partial \subscr{f}{red}}{\partial z}  \frac{\partial z^*}{\partial w_z}\dot w_z + \frac{\partial \subscr{f}{red}}{\partial w_x}  \dot w_x ; \subscr{f}{red}(t, x, w_x, w_z) \right\rrbracket_x \nonumber \\
       &~~~\leq -c_f \|\subscr{f}{red}(t, x, w_x, w_z)\|_x^2 + \Big( \ell_{f_t} + \ell_{f_z} \frac{\ell_{g_w}}{c_g} \bar w_z + \ell_{f_w} \bar w_x \Big) \|\subscr{f}{red}(t, x, w_x, w_z)\|_x, \nonumber
    \end{align}
    where $\bar w_z := \operatorname{ess~sup}_{t \geq 0} \|\dot w_z(t)\|_{w_z}$ (and similarly for $\bar w_x$).
    Apply the Gr\"{o}nwall Inequality to obtain
    \begin{align}
        \|\subscr{f}{red}(t, x, w_x, w_z)\|_x &\leq e^{-c_f t} \|\subscr{f}{red}(0, x, w_x, w_z)\|_x  \\
        & \quad\quad + \Big( \ell_{f_t}  + \ell_{f_z} \frac{\ell_{g_w}}{c_g} \bar w_z + \ell_{f_w} \bar w_x \Big) \frac{1}{c_f} (1 - e^{-c_f t}). \nonumber
    \end{align}   
    Then, the time derivative of the steady state is bounded as
    \begin{align*}
        \Big\|&\frac{d}{dt} z^*(x, w_z) \Big\|_x \leq \frac{\ell_{g_w}}{c_g} \|\dot w_z\|_{w_z} + \frac{\ell_{g_x}}{c_g} \ell_{f_\eps} \eps  
         + \frac{\ell_{g_x}}{c_g} \ell_{f_z} \|z - z^*(x, w_z)\|_z 
         \\ & \quad + \frac{\ell_{g_x}}{c_g} e^{-c_f t} \|\subscr{f}{red}(0, x, w_x, w_z)\|_x + \frac{\ell_{g_x}}{c_g c_f} \Big( \ell_{f_t} + \ell_{f_z} \frac{\ell_{g_w}}{c_g} \bar w_z + \ell_{f_w} \bar w_x \Big)  (1 - e^{-c_f t}).
    \end{align*}
    Accordingly, the bound on $\eps D^+\|z - z^*(x, w_z)\|_z$ is
    \begin{multline*}
        \eps D^+\|z - z^*(x, w_z)\|_z \leq -c_g \|z - z^*(x, w_z)\|_z + \eps  \ell_{g_\eps} 
        + \eps \frac{\ell_{g_x}}{c_g}e^{-c_f t} \|\subscr{f}{red}(0, x, w_x, w_z)\|_x 
        \\
         \null\quad + \eps \Big( \frac{\ell_{g_w}}{c_g} \bar w_z + \frac{\ell_{g_x}}{c_g} \big(\ell_{f_\eps} \eps + \ell_{f_z} \|z - z^*(x, w_z)\|_z \big) \Big)   \\         
         + \eps \frac{\ell_{g_x}}{c_g c_f} \Big( \ell_{f_t}  + \ell_{f_z} \frac{\ell_{g_w}}{c_g} \bar w_z + \ell_{f_w} \bar w_x \Big)  (1 - e^{-c_f t}). 
    \end{multline*}
    For brevity, define $c_y := c_g \eps^{-1} - \frac{\ell_{g_x}}{c_g} \ell_{f_z}$ and $\delta_y := \frac{\ell_{g_x}}{c_g c_f} \big( \ell_{f_t} + \ell_{f_z} \frac{\ell_{g_w}}{c_g} \bar w_z + \ell_{f_w} \bar w_x \big).$
    We now consider two cases. 

    \noindent 
    \textit{Case 1: $c_f \neq c_y$.} 
    Apply the Gr\"{o}nwall Bellman Inequality to obtain     
    \begin{align}
        \|y(t)&\|_z \leq e^{-c_y t} \|y(0)\|_z  + \frac{\ell_{g_\eps} + \frac{\ell_{g_w}}{c_g} \bar w_z + \eps \frac{\ell_{g_x}}{c_g} \ell_{f_\eps} + \delta_y }{c_y} \nonumber \\
        & - e^{-c_y t} \frac{\ell_{g_\eps} + \frac{\ell_{g_w}}{c_g} \bar w_z + \eps \frac{\ell_{g_x}}{c_g} \ell_{f_\eps} + \delta_y}{c_y} + \Big(\frac{\ell_{g_x}}{c_g} \|\subscr{f}{red}(0)\|_x - \delta_y \Big) \Big(\frac{e^{-c_f t} - e^{-c_y t}}{c_y - c_f} \Big)
    \end{align}
    where $\|\subscr{f}{red}(0)\|_x$ is a short-hand notation for $\|\subscr{f}{red}(0, x, w_x, w_z)\|_x$. 
    Here, $\delta := \ell_{g_\eps} + \frac{\ell_{g_w}}{c_g} \bar w_z + \eps \frac{\ell_{g_x}}{c_g} \ell_{f_\eps} + \delta_y$ and $E_y(t, \eps) := - e^{-c_y t} \frac{\ell_{g_\eps} + \frac{\ell_{g_w}}{c_g} \bar w_z + \eps \frac{\ell_{g_x}}{c_g} \ell_{f_\eps} + \delta_y}{c_y} + \Big(\frac{\ell_{g_x}}{c_g} \|\subscr{f}{red}(0)\|_x - \delta_y \Big) \Big(\frac{e^{-c_f t} - e^{-c_y t}}{c_y - c_f} \Big).$

    \noindent 
    \textit{Case 2: $c_f = c_y$.} 
   In this case we obtain, 
    \begin{align}
        \|y(t)\|_z &\leq e^{-c_y t} \|y(0)\|_z  + \frac{\ell_{g_\eps} + \frac{\ell_{g_w}}{c_g} \bar w_z + \eps \frac{\ell_{g_x}}{c_g} \ell_{f_\eps} + \delta_y }{c_y} \nonumber \\
        & \quad \quad - e^{-c_y t} \frac{\ell_{g_\eps} + \frac{\ell_{g_w}}{c_g} \bar w_z + \eps \frac{\ell_{g_x}}{c_g} \ell_{f_\eps} + \delta_y}{c_y} + \Big(\frac{\ell_{g_x}}{c_g} \|\subscr{f}{red}(0)\|_x - \delta_y \Big) \big(t e^{-c_y t} \big).
    \end{align}
    Here, $\delta := \ell_{g_\eps} + \frac{\ell_{g_w}}{c_g} \bar w_z + \eps \frac{\ell_{g_x}}{c_g} \ell_{f_\eps} + \delta_y$ and 
    $E_y(t, \eps) := - e^{-c_y t} \frac{\ell_{g_\eps} + \frac{\ell_{g_w}}{c_g} \bar w_z + \eps \frac{\ell_{g_x}}{c_g} \ell_{f_\eps} + \delta_y}{c_y} + \Big(\frac{\ell_{g_x}}{c_g} \|\subscr{f}{red}(0)\|_x - \delta_y \Big) \big(t e^{-c_y t} \big).$
    The bound~\eqref{eq:lim-sup-y} follows from the above computations.
\end{proof}

\subsection{Proof of Theorem~\ref{thm:closeness-of-solns-general}}

\noindent
\emph{(a) Bound on $\|x(t) - x_\text{r}(t)\|$. } We first prove the bound~\eqref{eq:thm-close-solns-general-a}. 
Consider the slow dynamics~\eqref{eq:gen-system-a}, and substitute $z = y + z^*(x, w_z)$ for its argument. Then, rewrite the dynamics~\eqref{eq:gen-system-a} as,
\begin{align}
    \dot x = &f(t,x, z^*(x,w_z), w_x, 0) + \underbrace{f(t,x, y + z^*(x,w_z), w_x, \eps) - f(t,x, z^*(x,w_z), w_x, \eps)}_{:= d(x,y)} \nonumber \\
    &\quad + \underbrace{f(t,x, z^*(x,w_z),w_x, \eps) - f(t,x, z^*(x,w_z), w_x, 0)}_{:=d_\eps(x,y_\text{bl})}, \label{eq:rewritten-slow-gen-dyn} % \label{eq:gen-a-reshifted-simple}
\end{align}
% \sloppy makes it so that the in line math doesn't spill over the edges
\sloppy which are obtained by adding and subtracting $f(t,x, z^*(x,w_z), w_x, 0)$ and $f(t,x, z^*(x,w_z), w_x, \eps)$. 
\sloppy From Assumption \ref{as:Lipschitz-interconnection-x-z} we have that $y \mapsto f(t,x, y + z^*(x,w_z), w_x,\eps)$ is Lipschitz continuous with constant $\ell_{f_z} \geq 0$; in fact, for fixed $\eps$, $x$, $t$, $w_x$, $w_z$, and for any $y_1, y_2 \in \R^{n_z}$, one may verify that
\begin{align*}
    \|f(t, x, y_1 + z^*(x,w_z),& w_x,\eps)  - f(t, x, y_2 + z^*(x,w_z), w_x,\eps)\|_x  \\
    &\leq \ell_{f_z} \|y_1 - z^*(x,w_z) - y_2 + z^*(x,w_z)\|_z = \ell_{f_z} \|y_1 - y_2\|_z.
\end{align*} 

Let $x(t)$ and $\subscr{x}{r}(t)$ be solutions of the slow dynamics \eqref{eq:rewritten-slow-gen-dyn} and of the reduced model \eqref{eq:gen-reduced}, respectively. We bound $\|x(t) - \subscr{x}{r}(t)\|_x D^{+} \|x(t) - \subscr{x}{r}(t)\|_x$ as follows: 
\begingroup
\allowdisplaybreaks
\begin{align*}
    &\|x(t) - \subscr{x}{r}(t)\|_x D^{+} \|x(t) - \subscr{x}{r}(t)\|_x \\
    &\overset{\text{by (P1)}}{=} \llbracket f(t, x,z^*(x,w_z),w_x,0) + d(x, y) + d_\eps(x)  - f(t, \subscr{x}{r},z^*(\subscr{x}{r},w_z),w_x,0); x - \subscr{x}{r}\rrbracket_x  \\
    &\overset{\text{by (P2)}}{\leq} \llbracket f(t, x,z^*(x,w_z),w_x,0)  - f(t, \subscr{x}{r},z^*(\subscr{x}{r},w_z),w_x,0); x - \subscr{x}{r}\rrbracket_x \\
    & \quad\quad\quad\quad  + \llbracket d(x, y); x - \subscr{x}{r}\rrbracket_x + \llbracket d_\eps(x, y); x - \subscr{x}{r}\rrbracket_x\\
    &\quad\leq -c_f \|x - \subscr{x}{r}\|_x^2 + \eps\ell_{f_\eps}\|x - \subscr{x}{r}\|_x + \ell_{f_z} \Big(e^{-c_y t}\|y(0)\|_z + \frac{\delta}{c_y} + E_y(t, \eps) \Big) \|x - \subscr{x}{r}\|_x,
\end{align*}
\endgroup
where we used the properties (P1)-(P2) of the weak pairings, apply Assumption \ref{as:red-str-contracting} to the last inequality, and use Lemma~\ref{lem:bound-y}, recalling that the constants $\delta_y, \delta$, and $c_y$ and the map $E_y(t,\eps)$ are identified in Lemma~\ref{lem:bound-y}.
Moreover, we used the computations 
\begin{align*}
    \llbracket d(x, y); x - &\subscr{x}{r}\rrbracket_x = \llbracket f(t,x, y + z^*(x,w_z),w_x,\eps) - f(t,x,z^*(x,w_z),w_x,\eps); x - \subscr{x}{r}\rrbracket_x \\
    &\leq \ell_{f_z} \|y\|_z \|x - \subscr{x}{r}\|_x \overset{\text{Lem.~\ref{lem:bound-y}}}{\leq} \ell_{f_z} \Big(e^{-c_y t} \|y(0)\|_z +  \frac{\delta}{c_y} + E_y(t, \eps) \Big) \|x - \subscr{x}{r}\|_x,
\end{align*}
as well as
\begin{align*}
    \llbracket d_\eps(x, y); x - \subscr{x}{r}\rrbracket_x &= \llbracket f(t, x,z^*(x,w_z), w_x,\eps)  - f(t, x,z^*(x,w_z),w_x,0); x - \subscr{x}{r}\rrbracket_x \\
    &\leq \ell_{f_\eps} \eps \|x - \subscr{x}{r}\|_x,
\end{align*}
where $\eps \geq 0$. 
Thus, dividing by $\|x(t) - \subscr{x}{r}(t)\|_x \neq 0$, we obtain the bound
\begin{align*}
    D^{+} \|x(t) - \subscr{x}{r}(t)\|_x \leq& -c_f \|x(t) - \subscr{x}{r}(t)\|_x + \eps \ell_{f_\eps}+\ell_{f_z} \Big(e^{-c_y t} \|y(0)\|_y +  \frac{\delta}{c_y} + E_y(t, \eps) \Big)\, .
\end{align*}
We next consider two different cases. 

\emph{Case~1}: $c_f \neq c_y$. 
Applying the Gr\"{o}nwall Comparison Lemma  and~\cite[Cor. 3.17]{FB:24-CTDS} we have:
\begin{align}\label{eq:x-xr}
     \|x(t) - \subscr{x}{r}(t)\|_x \leq& e^{-c_f t} \|x(0) - \subscr{x}{r}(0)\|_x \nonumber  \\
     & + \Big(\eps \ell_{f_\eps} + \ell_{f_z} \frac{\delta}{c_y} \Big)(1 - e^{-c_f t})  +  \big(\ell_{f_z} \|y(0)\|_z - \frac{ \delta}{c_y} \big)\Big(\frac{e^{-c_y t} - e^{-c_f t}}{c_f - c_y} \Big)\\
    & + \Big( \frac{\ell_{g_x}}{c_g}\|\subscr{f}{red}(0)\|_x - \delta_y\Big) \Big( \frac{e^{-c_f t} - e^{-c_y t}}{(c_y - c_f)^2}-  \frac{t e^{-c_y t}}{(c_y - c_f)} \Big) \, .\nonumber
\end{align}
The bound in~\eqref{eq:thm-close-solns-general-a} is then obtained by defining the function $E_x(t,\eps)$ as:
\begin{align}\label{eq:E-x-t-eps-function}
    E_x (t,\eps) := & \Big(\ell_{f_z} \|y(0)\|_z - \frac{ \delta}{c_y} \Big) \Big(\frac{e^{-c_y t} - e^{-c_f t}}{c_f - c_y} \Big)  - \Big(\ell_{f_\eps} \ell_{f_z}\frac{\delta}{c_g} \Big) e^{-c_f t} \nonumber \\
    & + \Big( \frac{\ell_{g_x}}{c_g}\|\subscr{f}{red}(0)\|_x - \delta_y\Big) \Big( \frac{e^{-c_f t} - e^{-c_y t}}{(c_y - c_f)^2}-  \frac{t e^{-c_y t}}{(c_y - c_f)} \Big) \, .\nonumber
\end{align}

\emph{Case~2}: $c_f = c_y$.  Applying the Gr\"{o}nwall Comparison Lemma and re-arranging terms,  $E_x(t,\eps)$ is given by 
\begin{align}\label{eq:E-x-t-eps-function-2}
E_x (t,\eps) &:= \big(\ell_{f_z} \|y(0)\|_z - \frac{ \delta}{c_y} \big) t e^{-c_f t}  - \big( \ell_{f_z} \frac{\delta}{c_g} + \ell_{f_\eps} \big)e^{-c_f t} \\
& \quad\quad + \Big( \frac{\ell_{g_x}}{c_g}\|\subscr{f}{red}(0)\|_x - \delta_y\Big) \frac{1}{2} t^2 e^{-c_y t} \, . \nonumber
\end{align}

\noindent
\textit{(b) Bound on} $\|z(t) - z^*(\subscr{x}{r}(t), w_z(t))\|_z$. We recall that $z(t)$ is a solution of the original system \eqref{eq:gen-system-b}, and $\subscr{x}{r}(t)$ is a solution to the reduced model~\eqref{eq:gen-reduced}.

We first consider this inequality:
\begin{align}\label{eq:fast-var-tri-split}
    \|z  - z^*(\subscr{x}{r}, w_z)\|_z \leq & \|z  - z^*(x, w_z)\|_z + \|z^*(x, w_z) - z^*(\subscr{x}{r}, w_z)\|_z,
\end{align}
and analyze each term separately. Since $y := z - z^*(x, w_z)$, the first bound is just Lemma~\ref{lem:bound-y}. For the second bound of \eqref{eq:fast-var-tri-split}, we use Lemma~\ref{lem:lipz} to obtain $\|z^*(x, w_z) - z^*(\subscr{x}{r}, w_z)\|_z \leq \frac{\ell_{g_x}}{c_g} \|x - \subscr{x}{r}\|_x$ and then bound $\|x - \subscr{x}{r}\|_x$ with~\eqref{eq:thm-close-solns-general-a}.
In total the bound on $\|z - z^*(\subscr{x}{r}, w_z)\|_z$ is given as, 
\begin{align}
\|z - z^*(\subscr{x}{r}, w_z)\|_z \leq &
     \frac{\delta}{c_y}  +\frac{\ell_{g_x}}{c_g} \Big(\eps \ell_{f_\eps} + \frac{\ell_{f_z} \delta}{c_y}\Big) + e^{-c_y t} \|y(0)\|_z \nonumber \\
     & + E_y(t, \eps)  + \frac{\ell_{g_x}}{c_g} \Big( e^{-c_f t} \|x(0) - \subscr{x}{r}(0)\|_x + E_x(t,\eps) \Big) ,
\end{align}
where setting $E_z(t, \eps)$ as,  
\begin{align}
    E_z(t, \eps) := e^{-c_y t} \|y(0)\|_z + E_y(t, \eps) + \frac{\ell_{g_x}}{c_g} \Big( e^{-c_f t} \|x(0) - \subscr{x}{r}(0)\|_x + E_x(t,\eps) \Big) 
\end{align}
yields the result.
This concludes the proof of the theorem.

\section{Results for OFO}\label{sec:results-for-OFO}

In this section, we apply our results of Section~\ref{sec:main-results} to the motivating example~\ref{sec:extensions-and-applications}. 
We continue our analysis of the closed-loop system~\eqref{eq:ex-LTI-grad} by 
following the structure of Section~\ref{sec:main-results}.
As such, the shifted dynamics are given by: 
\begin{subequations}\label{eq:ex-LTI-grad-shifted}
    \begin{align}\label{eq:ex-LTI-grad-shifted-b}
        \dot u &= - \nabla \phi(u) - G^\top \nabla \psi(y + G u + H w_z) , \\
        \eps \dot y &=  A y + \eps A^{-1}B \dot u + \eps A^{-1} E \dot w_z,\label{eq:ex-LTI-grad-shifted-a}
    \end{align}
\end{subequations}
with initial conditions $u(0) = u_0$ and $y(0) = z_0 - \subscr{z}{eq}(u_0, w_z(0)).$
On the other hand, the reduced model is, 
    \begin{align}\label{eq:ex-LTI-grad-reduced}
        \dot u_{\text{r}} &= - \nabla \phi(\subscr{u}{r}) - G^\top \nabla \psi(G \subscr{u}{r} + H w_z),
        \end{align}
with initial condition $\subscr{u}{r}(0) = u(0) = u_{0}$.  
Here, we emphasize that~\eqref{eq:ex-LTI-grad-reduced} is exactly the open-loop gradient flow~\eqref{eq:ex-LTI-openloop}. Moreover, the equilibrium trajectory $\subscr{u}{eq,r}(w_z(t))$ of~\eqref{eq:ex-LTI-grad-reduced} coincides with the unique optimal solution $u^*(w_z(t))$ of the problem~\eqref{eq:opt-prob-1} (note that we write these as functions of $w_z(t)$ since these are dependent on $w_z(t)$). In Theorem~\ref{th:ex-LTI-grad-eq-tracking}, we provide an equilibrium tracking bound with respect to $\subscr{u}{eq,r}(w_z(t)) \equiv u^*(w_z(t))$. 
For ease of reference, Table~\ref{tab:table-of-Lips-symbols-LTI-grad} documents relevant (one-sided) Lipschitz constants for the closed-loop system~\eqref{eq:ex-LTI-grad}.

% \begin{table}[t!]
%     \centering
%     \begin{tabular}{p{3.3cm}|p{3.3cm}} 
%  \hline \hline
% { \vspace{-.4cm}
% \begin{align*}
%     c_f &= \nu \\
%  \ell_{f_z} &= \Lip_z(f)  \\
%  \ell_{f_\eps} &= \ell_{\psi}\|G\| \\
%  \ell_{g_w} &= \|E\|  
% \end{align*}\vspace{-.4cm}} & {\vspace{-.4cm} \begin{align*}
%     c_g &= -\mu(A) \\
%  \ell_{g_x} &= \|B\|  \\
%  \ell_{g_\eps} &= \Lip_\eps (g) \\
%  \ell_{f_{w_z}} &= \|G\| \ell_\psi \|H\|
% \end{align*} \vspace{-.4cm}}  \\
%  \hline \hline
%     \end{tabular}
%     \caption{Constants for Closed Loop System~\eqref{eq:ex-LTI-grad} \vspace{-.4cm}}
%     \label{tab:table-of-Lips-symbols-LTI-grad}
% \end{table}

\begin{table}[t!]
    \centering
    \begin{tabular}{p{2.8cm}|p{2.8cm}|p{2.8cm}|p{2.8cm}} 
 \hline \hline
{ \vspace{-.4cm}
\begin{align*}
    c_f &= \nu \\
 \ell_{f_z} &= \Lip_z(f) % \\
 % \ell_{f_\eps} &= \ell_{\psi}\|G\| \\
 % \ell_{g_w} &= \|E\|  
\end{align*}\vspace{-.4cm}} & { \vspace{-.4cm}
\begin{align*}
 %    c_f &= \nu \\
 % \ell_{f_z} &= \Lip_z(f)  \\
 \ell_{f_\eps} &= \ell_{\psi}\|G\| \\
 \ell_{g_w} &= \|E\|  
\end{align*}\vspace{-.4cm}} & { \vspace{-.4cm}
\begin{align*}
   c_g &= -\mu(A) \\
 \ell_{g_x} &= \|B\|  % \\
 % \ell_{g_\eps} &= \Lip_\eps (g) \\
 % \ell_{f_{w_z}} &= \|G\| \ell_\psi \|H\|
\end{align*}\vspace{-.4cm}}  & {\vspace{-.4cm} \begin{align*}
 %    c_g &= -\mu(A) \\
 % \ell_{g_x} &= \|B\|  \\
 \ell_{g_\eps} &= \Lip_\eps (g) \\
 \ell_{f_{w_z}} &= \|G\| \ell_\psi \|H\|
\end{align*} \vspace{-.4cm}}  \\
 \hline \hline
    \end{tabular}
    \caption{Constants for Closed Loop System~\eqref{eq:ex-LTI-grad} \vspace{-.4cm}}
    \label{tab:table-of-Lips-symbols-LTI-grad}
\end{table}

\begin{theorem}[Equilibrium Tracking]\label{th:ex-LTI-grad-eq-tracking}
    Consider the closed loop system~\eqref{eq:ex-LTI-grad} 
    with $A \in \R^{n_x} \times \R^{n_x}$ Hurwitz, $t \mapsto w_z(t)$ locally absolutely continuous, and Assumptions~\ref{A:Lipschitz-costs}-\ref{A:strongly-convex-cost} satisfied. Assume that, 
    \begin{align}
        \eps < \eps_{0,\text{ofo}}^* := \frac{-\mu(A)}{\|A^{-1} BG^\top\| \ell_{\psi}}.
    \end{align}
    Then, the closed loop system~\eqref{eq:ex-LTI-grad} has a unique solution $z(t), u(t)$, for $t \geq 0$, that satisfies 
    \begin{subequations}\label{eq:ex-lti-grad-close-soln}
\begin{align}\label{eq:ex-lti-grad-close-soln-b}  
    \|u(t) - u^*(w_z(t))\| &\leq \eps \frac{\|G\|\ell_{\psi}\|H\|}{\nu(-\mu(A) - \eps \|A^{-1} B G^\top \|\ell_{\psi})}\bar w_z \\
    & \quad\quad + E_{u,\text{ofo}}(t) + \frac{\|G\|\ell_{\psi}\|H\|}{\nu^2}\bar w_z , \nonumber \\
    \label{eq:ex-lti-grad-close-soln-a} 
    \|z(t) - \subscr{z}{eq}(u^*(w_z(t)),w_z(t))\| &\leq  \eps \frac{\|H\| \bar w_z}{-\mu(A) - \eps \|A^{-1} B G^\top \|\ell_{\psi}} \Big(1 + \frac{\|G\|\ell_{\psi}\|H\|}{\nu}\Big)\\
    & \quad\quad + E_{z,\text{ofo}}(t), \nonumber
\end{align}
\end{subequations}
%\textcolor{red}{\eqref{eq:ex-lti-grad-close-soln-a} doesn't fit in one line, hence putting then both on two lines... }
where $\bar w_z := \operatorname{ess} \sup_{\tau \in [0, t]} \|\dot w_z(\tau)\|$, 
$t \mapsto u^*(w_z(t))$ is the optimizer of the optimization problem~\eqref{eq:opt-prob-1},
and the maps $t \mapsto E_{u,\text{ofo}}(t)$ and $t \mapsto E_{z,\text{ofo}}(t)$ tend to $0$ as $t \to +\infty$. \hfill  \QEDB 
\end{theorem}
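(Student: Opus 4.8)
The plan is to identify the closed-loop system~\eqref{eq:ex-LTI-grad} as a special case of the general two-time scale system~\eqref{eq:gen-system}, with slow state $x \equiv u$ and vector fields $f(u,z) = -\nabla\phi(u) - G^\top \nabla\psi(z)$ and $g(u,z,w_z) = Az + Bu + Ew_z$, and then to \emph{combine} Theorem~\ref{thm:closeness-of-solns-general} with an equilibrium-tracking estimate for the reduced gradient flow. Since neither $f$ nor $g$ depends on $\eps$, on $t$, or on a slow disturbance $w_x$, the terms built from $\ell_{f_t},\ell_{f_w},\ell_{f_\eps},\ell_{g_\eps}$ and $\bar w_x$ drop out of the general bounds, leaving only the $\bar w_z$-driven and transient contributions. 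First I would verify the hypotheses: Assumption~\ref{as:fast-str-contracting} holds with $c_g = -\mu(A) > 0$ since $A$ is Hurwitz and $\osL_z(g) = \mu(A)$; Assumption~\ref{as:red-str-contracting} holds with $c_f = \nu$ because, under Assumptions~\ref{A:Lipschitz-costs}--\ref{A:strongly-convex-cost}, the reduced model~\eqref{eq:ex-LTI-grad-reduced} \emph{is} the open-loop gradient flow~\eqref{eq:ex-LTI-openloop}, which is strongly contracting with rate $\nu$; and the interconnection constants are $\ell_{f_z} = \|G\|\ell_\psi$, $\ell_{g_x} = \|B\|$, $\ell_{g_w} = \|E\|$, as in Table~\ref{tab:table-of-Lips-symbols-LTI-grad}.

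The essential refinement over a verbatim substitution into Theorem~\ref{thm:closeness-of-solns-general} is that the LTI quasi-steady-state map is affine and known exactly, $z^*(u,w_z) = Gu + Hw_z$, so that $\partial z^*/\partial u = G$ and $\partial z^*/\partial w_z = H$ are constant. I would therefore re-run the transient estimate of Lemma~\ref{lem:bound-y} on the shifted fast dynamics~\eqref{eq:ex-LTI-grad-shifted-a}, replacing the conservative Lemma~\ref{lem:lipz} bounds $\ell_{g_x}/c_g$ and $\ell_{g_w}/c_g$ by the exact norms $\|G\|$ and $\|H\|$ (which are no larger, since $\|A^{-1}\| \le 1/(-\mu(A))$). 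The interconnection term $\eps A^{-1}B\dot u$ then has exact $y$-Lipschitz constant $\eps\|A^{-1}BG^\top\|\ell_\psi$, producing the sharpened rate $\eps c_y = -\mu(A) - \eps\|A^{-1}BG^\top\|\ell_\psi$; its positivity is precisely the stated admissibility bound $\eps < \eps_{0,\text{ofo}}^* = (-\mu(A))/(\|A^{-1}BG^\top\|\ell_\psi)$. This delivers the $O(\eps)$ estimates of $\|u(t) - \subscr{u}{r}(t)\|$ and of $\|z(t) - z^*(\subscr{u}{r}(t),w_z(t))\|$, with the decaying pieces absorbed into $E_{u,\text{ofo}}(t)$ and $E_{z,\text{ofo}}(t)$.

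It remains to control the tracking error $\|\subscr{u}{r}(t) - u^*(w_z(t))\|$ of the reduced model. Because~\eqref{eq:ex-LTI-grad-reduced} is $\nu$-contracting and its unique time-varying equilibrium coincides with the optimizer $u^*(w_z(t))$ of~\eqref{eq:opt-prob-1}, I would invoke~\cite[Thm.~2]{AD-VC-AG-GR-FB:23f}: differentiating the optimality condition $\nabla\phi(u^*) + G^\top\nabla\psi(Gu^* + Hw_z) = 0$ and using $\nu$-strong convexity gives the sensitivity $\Lip_{w_z}(u^*) \le \|G\|\ell_\psi\|H\|/\nu$, whence $\limsup_{t\to\infty}\|\subscr{u}{r}(t) - u^*(w_z(t))\| \le \|G\|\ell_\psi\|H\|\,\bar w_z/\nu^2$; this is the $\eps$-independent term in~\eqref{eq:ex-lti-grad-close-soln-b}, and the triangle inequality $\|u - u^*\| \le \|u - \subscr{u}{r}\| + \|\subscr{u}{r} - u^*\|$ assembles~\eqref{eq:ex-lti-grad-close-soln-b}. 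For the fast-variable bound~\eqref{eq:ex-lti-grad-close-soln-a}, I would start from the $z$-estimate relative to the reduced quasi-steady-state $z^*(\subscr{u}{r},w_z) = G\subscr{u}{r} + Hw_z$ and convert the reference to $\subscr{z}{eq}(u^*,w_z) = Gu^* + Hw_z$ through the exact affine identity $\|z^*(\subscr{u}{r},w_z) - \subscr{z}{eq}(u^*,w_z)\| \le \|G\|\|\subscr{u}{r} - u^*\|$, the extra sensitivity supplying the multiplicative factor $(1 + \|G\|\ell_\psi\|H\|/\nu)$.

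I expect the main obstacle to be the constant bookkeeping of the second and fourth steps: confirming that the exact affine sensitivities $G,H$ and the exact interconnection norm $\|A^{-1}BG^\top\|$ propagate correctly through the Grönwall steps underlying Lemma~\ref{lem:bound-y} and Theorem~\ref{thm:closeness-of-solns-general}, so that the generic composite constants $\{\delta_{x,i},\delta_{z,i}\}$ collapse to the clean LTI coefficients $\|G\|\ell_\psi\|H\|$ and $\|H\|(1+\|G\|\ell_\psi\|H\|/\nu)$ appearing in~\eqref{eq:ex-lti-grad-close-soln}, and verifying that the $\eps$-free contribution enters only through the reduced-model tracking step and the multiplicative sensitivity factor rather than through the singular-perturbation mismatch itself.
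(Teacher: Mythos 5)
Your proposal follows essentially the same route as the paper's own (sketched) proof: the triangle-inequality decompositions $\|u - u^*\| \le \|u - \subscr{u}{r}\| + \|\subscr{u}{r} - u^*\|$ and $\|z - \subscr{z}{eq}(u^*,w_z)\| \le \|z - \subscr{z}{eq}(\subscr{u}{r},w_z)\| + \|\subscr{z}{eq}(\subscr{u}{r},w_z) - \subscr{z}{eq}(u^*,w_z)\|$, with the first terms controlled by Theorem~\ref{thm:closeness-of-solns-general} and the second terms by~\cite[Thm.~2]{AD-VC-AG-GR-FB:23f} via $u^* \equiv \subscr{u}{eq,r}$. Your additional observation that the generic constants $\ell_{g_x}/c_g$, $\ell_{g_w}/c_g$ must be replaced by the exact affine sensitivities $\|G\|$, $\|H\|$, $\|A^{-1}BG^\top\|$ to recover the stated threshold $\eps_{0,\text{ofo}}^*$ and coefficients is correct and is exactly what the paper does implicitly when it reports the sharper LTI-specific constants.
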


We omit the expressions of $E_{u,\text{ofo}}(t)$ and $E_{z,\text{ofo}}(t)$  due to space limitations. To prove~\eqref{eq:ex-lti-grad-close-soln-b}, we first apply the triangle inequality on $\|u(t) - u^*(w_z(t))\| \leq \|u(t) - u_\text{r}(t)\| + \|\subscr{u}{r}(t) - u^*(w_z(t))\|$. We bound the first term using our result in Theorem~\ref{thm:closeness-of-solns-general} and bound the second term using~\cite[Thm. 2]{AD-VC-AG-GR-FB:23f}, recalling that $u^*(w_z(t)) \equiv \subscr{u}{eq,r}(w_z(t))$.  We showed that $z(t)$ tends to its steady state corresponding to the optimizer $u^*(w_z)$ by computing $\|z - \subscr{z}{eq}(u^*(w_z), w_z)\| \leq \|z - \subscr{z}{eq}(\subscr{u}{r}, w_z)\| + \|\subscr{z}{eq}(\subscr{u}{r}, w_z) - \subscr{z}{eq}(u^*(w_z), w_z)\|$, 
where the first term is bounded using our Theorem~\ref{thm:closeness-of-solns-general} and $\|\subscr{z}{eq}(\subscr{u}{r},w_z) - \subscr{z}{eq}(u^*(w_z), w_z))\| \leq \|A^{-1} B\| \|\subscr{u}{r} - u^*(w_z)\|$, which is then bounded using~\cite[Thm. 2]{AD-VC-AG-GR-FB:23f}. 

It is interesting to notice that $u(t)$ approaches the trajectory of the optimizer $u^*(w_z(t))$ up to a neighborhood depending on the time-variability of the disturbance $w_z(t)$; a similar behavior 
can be seen for $z(t)$. 
These results are in line with the input-to-state stability (ISS) results of~\cite{MC-ED-AB:20,LC-GB-EDA:22}.

\vspace{.1cm}

\begin{remark}[Differences with Lyapunov Theory]\label{rem:bound-eta-larger-for-ct}
Compared to the Lyapunov-based stability analysis of systems of the form~\eqref{eq:ex-LTI-grad},
our contraction-theoretic approach  allows  one to derive transient and asymptotic bounds on the fast and slow variables separately in~\eqref{eq:ex-lti-grad-close-soln} instead of one bound on the concatenated state.    
\hfill $\triangle$
\end{remark}

\section{Linear Time-Invariant Dynamics}
\label{sec:LTI-dynamics}

We consider the \textit{two-time scale LTI system},
\begin{subequations}\label{eq:LTI-dynamics}
\begin{align}\label{eq:LTI-dynamics-slow}
    \dot x &= A x + B z, \hspace{1.0cm} x(0) = x_0 \\
    \label{eq:LTI-dynamics-fast}
    \epsilon \dot z &= C x + D z,  \hspace{1.0cm} z(0) = z_0
\end{align}
\end{subequations}
where the states of the two sub-systems are $x \in \R^{n_x}$ and $z \in
\R^{n_z}$, the matrices are $A\in\R^{n_x} \times \R^{n_x}$, $B\in\R^{n_x}
\times \R^{n_z}$, $C\in\R^{n_z}\times \R^{n_x}$, and $D\in\R^{n_z} \times
\R^{n_z}$, and $\eps$ is a parameter.

%% Assume that there exist norms $\|\cdot\|_x$ and $\|\cdot\|_z$
%% with compatible weak pairings $\llbracket \cdot ; \cdot \rrbracket_x$ and
%% $\llbracket \cdot ; \cdot \rrbracket_z$ on $\R^{n_x}$ and $\R^{n_z}$,
%% respectively; for notational simplicity, let $\|\cdot\|_x$ and
%% $\|\cdot\|_z$ denote the induced matrix norms on $\R^{n_x \times n_x}$ and
%% $\R^{n_z \times n_z}$, respectively.

First, we assume that the matrix $D$ is Hurwitz and select a norm
$\|\cdot\|_z$ on $\R^{n_z}$ such that $\mu_z(D)$ is negative and within
$\eps$ of the spectral abscissa of $D$, for an arbitrarily small $\eps$
(e.g., see~\cite[Lemma~2.27]{FB:24-CTDS}). Accordingly, define $c_g :=
-\mu_z(D) \leq \lvert \alpha(D) \lvert +\eps$.  Note that
Assumption~\ref{as:aut-fast-str-contracting} is satisfied.

Since $D$ is Hurwitz, for any fixed $x \in \R^{n_x}$ the unique solution to
$0 = C x + D z^*(x)$ is given by $z^*(x) := -D^{-1} C x$. %; clearly, the map $x \mapsto z^*(x)$ satisfies Assumption~\ref{as:extra-regularity_auto}. 
As
in the previous sections, consider the change of variables $y:= z - z^*(x)
= z + D^{-1} C x$, and write the two-time scale LTI
system~\eqref{eq:LTI-dynamics} in the new variables to define the
\textit{shifted LTI system}, given by:
\begin{subequations}\label{eq:LTI-dynamics-new-var}
\begin{align}\label{eq:LTI-dynamics-slow-new-var}
    \dot x &= (A - B D^{-1} C) x + B y \\
    \label{eq:LTI-dynamics-fast-new-var}
    \epsilon \dot y &= (D + \eps D^{-1} C B) y + \eps D^{-1} C (A - B D^{-1} C) x
\end{align}
\end{subequations}
with initial conditions $x(0) = x_0$ and $y(0) = z_0 - z^*(x_0)$.
Moreover, the \textit{reduced LTI model} is,
\begin{align}\label{eq:LTI-dynamics-reduced}
    \dot x_{\text{r}} &= (A - B D^{-1} C) x_{\text{r}}, \hspace{1.0cm}
    x_{\text{r}}(0) = x_{{\text{r}},0} .
\end{align}
We now assume that the matrix $(A - BD^{-1} C) \in \R^{n_x \times n_x}$ is
Hurwitz and select a norm $\|\cdot\|_x$ on $\R^{n_x}$ such that $\mu_x(A -
BD^{-1} C)$ is negative and within $\eps$ of the spectral abscissa of $A -
BD^{-1} C$, for an arbitrarily small $\eps$ (e.g.,
see~\cite[Lemma~2.27]{FB:24-CTDS}). Accordingly, define $c_f := -\mu_x(A -
BD^{-1} C) \leq \lvert \alpha(A - BD^{-1} C) \lvert+\eps$.  Note that
Assumption~\ref{as:red-str-contracting-aut} is satisfied.

Finally, we need some more notation. Given matrices $F \in \R^{n_x \times
  n_z}$ and $F'\in \R^{n_z \times n_x}$, we define the induced norms:
\begin{align*}
    \|F\|_{z \to x} &:= \max\{ \|F z\|_x : z \in \R^{n_z}, ~ \|z\|_z = 1 \}, \\
    \|F'\|_{x \to z} &:= \max\{ \|F' z\|_z : x \in \R^{n_x}, ~ \|x\|_x = 1 \}.
\end{align*}
In the following, we discuss how LTI dynamics yield tighter bounds relative
to the ones in Theorem~\ref{thm:closeness-of-solns-aut} by following the same proof method of Section~\ref{sec:proofs} (rather than applying the result of Theorem~\ref{thm:closeness-of-solns-aut} outright). 
% alternative version of Theorem~\ref{thm:str-contractive-whole-system-aut}
% by exploiting key properties of the LTI systems.

\begin{theorem}[Closeness of the solutions for LTI systems]
\label{thm:closeness-of-solns-LTI}
Consider the two-time scale LTI system~\eqref{eq:LTI-dynamics}, assume that
the matrices $D$ and $(A - BD^{-1} C)$ are Hurwitz, and select norms
$\|\cdot\|_x$ and $\|\cdot\|_z$ as explained above.  Suppose that
\begin{align}
  0 < \eps < \eps_{0,\text{LTI}}^* := \frac{\lvert \mu_z(D)\lvert}{\|D^{-1} C  B\|_z}.
\end{align}
Then, the solutions $x(t), z(t)$, for $t \geq 0,$ satisfy the bounds
\allowdisplaybreaks
\begin{subequations}\label{eq:thm-close-solns-LTI}
  \begin{align}
     \|x(t) - \subscr{x}{r}(t)\|_x &\leq e^{\mu_x(A - BD^{-1}C) t} \|x(0) - \subscr{x}{r}(0)\|_x + G'_x(t) \label{eq:thm-close-solns-LTI-a}, \\
     \|z(t) - z^*(\subscr{x}{r}(t))\|_z &\leq G'_z(t),
    \label{eq:thm-close-solns-LTI-b} 
  \end{align}
\end{subequations}
where $t \mapsto \subscr{x}{r}(t)$ is a solution of the reduced LTI
model~\eqref{eq:LTI-dynamics-reduced} and the functions $t \mapsto G'_x(t)$
and $t \mapsto G'_z(t)$ satisfy $G'_x(t) \rightarrow 0$, $G'_z(t)
\rightarrow 0$ exponentially fast as $t \rightarrow +\infty$.  \QEDB
\end{theorem}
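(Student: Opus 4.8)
The plan is to reproduce, in the LTI setting, the three-step argument used in Section~\ref{sec:proofs} for Theorem~\ref{thm:closeness-of-solns-general}, while exploiting linearity to replace the generic one-sided Lipschitz and Lipschitz constants by the exact log-norms $\mu_z(D)=-c_g$, $\mu_x(A-BD^{-1}C)=-c_f$ and by induced matrix norms; this is precisely what sharpens the bounds relative to Corollary~\ref{thm:closeness-of-solns-aut}. Write $M:=A-BD^{-1}C$ for brevity. Since the system~\eqref{eq:LTI-dynamics} carries no disturbances, every ``asymptotic ball'' term from the general theorem collapses to zero and all residual transients decay exponentially, which is exactly the claim.

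\emph{Step 1 (transient bound on $y$).} I would apply the curve-norm derivative formula (P1) together with sub-additivity (P2) and (P3) to the shifted fast dynamics~\eqref{eq:LTI-dynamics-fast-new-var}, written as $\dot y=\tfrac1\eps(D+\eps D^{-1}CB)y+D^{-1}C M x$. Using $\mu_z(D+\eps D^{-1}CB)\le\mu_z(D)+\eps\|D^{-1}CB\|_z$ yields $D^+\|y\|_z\le -c_y\|y\|_z+\|D^{-1}C\|_{x\to z}\,\|Mx\|_x$, where $c_y:=c_g/\eps-\|D^{-1}CB\|_z$ is strictly positive exactly when $\eps<\eps_{0,\text{LTI}}^*$. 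The forcing enters only through $\|Mx(t)\|_x$, the reduced vector field evaluated along the true slow trajectory; as in the treatment of $\subscr{f}{red}$ in Lemma~\ref{lem:bound-y}, this term is controlled by the contractivity of the reduced model and decays at rate $c_f$. Gr\"onwall then produces an explicit $\|y(t)\|_z$ bound, split into the cases $c_f\ne c_y$ and $c_f=c_y$ exactly as in Lemma~\ref{lem:bound-y}, with a residual that vanishes exponentially.

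\emph{Steps 2--3 (closeness of $x$ and $z$).} The slow error $e:=x-\subscr{x}{r}$ obeys the linear equation $\dot e=Me+By$, so (P1)--(P3) give $D^+\|e\|_x\le -c_f\|e\|_x+\|B\|_{z\to x}\|y\|_z$; the Gr\"onwall Comparison Lemma against the Step-1 bound yields $\|x(t)-\subscr{x}{r}(t)\|_x\le e^{\mu_x(M)t}\|x(0)-\subscr{x}{r}(0)\|_x+G_x'(t)$ with $G_x'$ exponentially decaying, matching~\eqref{eq:thm-close-solns-LTI-a}. For the fast variable, $z^*(x)=-D^{-1}Cx$ is linear, so the triangle inequality gives $\|z-z^*(\subscr{x}{r})\|_z\le\|y\|_z+\|D^{-1}C\|_{x\to z}\|x-\subscr{x}{r}\|_x$; substituting the two previous bounds produces the exponentially decaying $G_z'(t)$ of~\eqref{eq:thm-close-solns-LTI-b}.

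\emph{Main obstacle.} The delicate point is the self-consistent coupling hidden in Step 1: $\|y\|_z$ is driven by $\|Mx\|_x$, yet $x$ settles only once $y$ does, since $\frac{d}{dt}(Mx)=M(Mx)+MBy$ carries a $y$-dependent forcing. Following Lemma~\ref{lem:bound-y}, one bounds $\|Mx\|_x$ via the reduced-model contraction, but making this rigorous requires the time-scale separation to dominate the cross-coupling, i.e. the $2\times2$ Metzler comparison system with diagonal $(-c_y,-c_f)$ and off-diagonal induced-norm gains $\|D^{-1}C\|_{x\to z}$ and $\|MB\|_{z\to x}$ must remain Hurwitz. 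This is guaranteed because $c_y\sim c_g/\eps$ grows as $\eps\to0^+$, so the loop gain $\|D^{-1}C\|_{x\to z}\|MB\|_{z\to x}/(c_yc_f)$ is small; the condition $\eps<\eps_{0,\text{LTI}}^*$ is exactly what keeps $c_y>0$, and the resonant case $c_f=c_y$ must be tracked separately to retain the explicit exponential rate.
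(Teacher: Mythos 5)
Your proposal is correct and follows essentially the same route as the paper: the paper proves Theorem~\ref{thm:closeness-of-solns-LTI} by rerunning the three-step argument of Lemma~\ref{lem:bound-y} and the proof of Theorem~\ref{thm:closeness-of-solns-general} on the shifted LTI system~\eqref{eq:LTI-dynamics-new-var} --- a weak-pairing/Gr\"onwall bound on $y$ with decay rate $\subscr{c}{L,y}=|\mu_z(D)|\eps^{-1}-\|D^{-1}CB\|_z$ and forcing $\|D^{-1}C\|_{x\to z}\|\subscr{A}{red}x\|_x$, then a Gr\"onwall bound on $x-\subscr{x}{r}$ driven by $\|B\|_{z\to x}\|y\|_z$, then the triangle inequality $\|z-z^*(\subscr{x}{r})\|_z\le\|y\|_z+\|D^{-1}C\|_{x\to z}\|x-\subscr{x}{r}\|_x$, with the same resonant/non-resonant case split. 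The ``main obstacle'' you flag is real but the paper resolves it the same way you sketch via the Lemma~\ref{lem:bound-y} template (bounding the reduced vector field $\|\subscr{A}{red}x(t)\|_x$ by $e^{\mu_x(\subscr{A}{red})t}\|\subscr{A}{red}x(0)\|_x$, as visible in the stated $G'_z$), so your more conservative two-dimensional Metzler comparison is compatible with, and if anything more careful than, the paper's treatment.
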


\vspace{.1cm}

The functions $G'_x(t)$ and $G'_z(t)$ are given as follows, where we additionally define $-\subscr{c}{L,y} := \mu_z(D) \eps^{-1} + \|D^{-1} C B\|_z$ and $\subscr{A}{red} := A - BD^{-1}C$ for brevity. 

\noindent \emph{Case~1}: $-\mu_x(\subscr{A}{red}) \neq \subscr{c}{L,y}$. 
\begingroup
\allowdisplaybreaks
\begin{align*}
    G'_x(t) &= \|B\|_{z \to x} \|z(0) + D^{-1} C x(0)\|_z \Big(\frac{e^{-\subscr{c}{L,y} t} - e^{\mu_x(\subscr{A}{red}) t}}{\mu_x(\subscr{A}{red}) -  \subscr{c}{L,y}} \Big) \\
    & \quad + \frac{\|B\|_{z \to x}\|D^{-1} C\|_{x \to z}\|\subscr{A}{red} x(0)\|_x }{\subscr{c}{L,y} - \mu_x(\subscr{A}{red})}\Big(t e^{\mu_x(\subscr{A}{red})t} \Big)\\
    & \quad + \frac{\|B\|_{z \to x}\|D^{-1} C\|_{x \to z}\|\subscr{A}{red} x(0)\|_x }{\subscr{c}{L,y} - \mu_x(\subscr{A}{red})}\Big( \frac{e^{\mu_x(\subscr{A}{red})t} - e^{-\subscr{c}{L,y} t}}{\subscr{c}{L,y} - \mu_x(\subscr{A}{red})} \Big),\\
    G'_z(t) &= e^{-\subscr{c}{L,y}t} \|z(0) + D^{-1} C x(0)\|_z  + \|D^{-1} C\|_{x \to z} e^{-\subscr{c}{L,y}t}  \\
    & \quad + \|D^{-1} C\|_{x \to z}\left( \|\subscr{A}{red} x(0)\|_x \Big( \frac{e^{\mu_x(\subscr{A}{red})t}  - e^{-\subscr{c}{L,y}t} }{\subscr{c}{L,y} - \mu_x(\subscr{A}{red})} \Big) + G'_x(t)\right).
\end{align*}
\endgroup

\noindent \emph{Case~2}: $-\mu_x(\subscr{A}{red}) = \subscr{c}{L,y}$.
\begingroup
\allowdisplaybreaks
\begin{align*}
    G'_x(t) & = \|B\|_{z \to x} \|z(0) + D^{-1} Cx(0)\|_x t e^{-\subscr{c}{L,y} t}  + \frac{1}{2}t^2 e^{\mu_x(\subscr{A}{red}) t},\\
     G'_z(t) &= e^{-\subscr{c}{L,y} t} \|z(0) + D^{-1} Cx(0)\|_x \\
    & \quad\quad + \|D^{-1} C\|_{x \to z} \big( e^{-\subscr{c}{L,y}t} + G'_x(t) +  \|\subscr{A}{red}x(0)\|_x t e^{-\subscr{c}{L,y}t} \big).
\end{align*}
\endgroup

We note that the expressions for $G'_x(t)$ and $G'_z(t)$ above are tighter than the ones that would have been obtained by substituting $\ell_{f_z} = \|B\|_{z \to x}$, $\ell_{g_x} = \|C\|_{x \to z}$, $c_g = -\mu(D)$, and $\ell_{f_x} = \|A\|_x$ in Theorem~\ref{thm:closeness-of-solns-aut}.  
For example, consider the following chain of bounds for the term $\|D^{-1} C\|_{x \to z}$ appearing in both functions $G'(x)$ and $G'(z)$: $\|D^{-1} C \|_{x \to z} \leq \|D^{-1}\|_z \|C\|_{x \to z} \leq \frac{\|C\|_{x \to z}}{-\mu(D)}$,
where $\|D^{-1}\|_z \leq -\frac{1}{\mu(D)}$ by the uniform monotonicity property~\cite[Lem. 2.11]{FB:24-CTDS}. 
This implies that $\frac{\ell_{g_x}}{c_g}$ is an upper bound on $\|D^{-1} C \|_{x \to z}$.  
% Similarly, we customize Theorem~\ref{thm:str-contractive-whole-system-aut} for the shifted LTI system~\eqref{eq:LTI-dynamics-new-var}.
Next, we offer a new contractivity result for the shifted LTI system~\eqref{eq:LTI-dynamics-new-var}.

\begin{theorem}[Contractivity of the shifted LTI system]\label{thm:str-contractive-whole-system-LTI}
  Consider the two-time scale LTI system~\eqref{eq:LTI-dynamics}, assume
  that the matrices $D$ and $(A - BD^{-1} C)$ are Hurwitz, and select norms
  $\|\cdot\|_x$ and $\|\cdot\|_z$ as explained above.
  %% Consider the shifted LTI system~\eqref{eq:LTI-dynamics-new-var} and
  %% assume that the matrices $D$ and $(A - BD^{-1} C)$ are Hurwitz.
Suppose $\eps$ satisfies $0 < \eps < \eps^*_\text{LTI}$, where
\begin{align*}
\eps^*_\text{LTI} := 
   |\lognorm{D}| \Bigl(\frac{\norm{B}_{z \to x} \|D^{-1}C (A - B D^{-1} C)\|_{x \to z}}
        {|\lognorm{A-BD^{-1}C}|}+ \|D^{-1}CB\|_z \Bigr)^{-1} .
\end{align*}
Then, the shifted LTI system~\eqref{eq:LTI-dynamics-new-var} is strongly
infinitesimally contracting with contraction rate $\lvert
\alpha(\Gamma_{\text{LTI},\eps}) \lvert$ with respect to the norm
$\|\cdot\|_{N_{\text{LTI}}}$, where
\begin{align} 
    & \Gamma_{\text{LTI},\eps} := \left[
    \begin{array}{c;{4pt/4pt}c}
        A - BD^{-1} C & B \\ 
        \hdashline[4pt/4pt]
        D^{-1} C(A -BD^{-1}C) & \eps^{-1} D + D^{-1} C B
    \end{array}
\right] \nonumber 
\end{align}
and $N_{\text{LTI}} \in \R_{>0}^2$ is defined in the proof. % for Theorem~\ref{thm:str-contractive-whole-system-general}.
\QEDB
\end{theorem}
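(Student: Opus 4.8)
The plan is to treat the shifted LTI system~\eqref{eq:LTI-dynamics-new-var} as an interconnection of two contracting blocks and to invoke the matrix-measure comparison principle of~\cite[Thm.~3.23]{FB:24-CTDS}, exactly as in the (commented) argument for the general system. Because the dynamics are LTI, the Jacobian is the constant matrix $\Gamma_{\text{LTI},\eps}$, so it suffices to exhibit a weighted norm in which $\mu(\Gamma_{\text{LTI},\eps})<0$. The route is: build a $2\times 2$ Metzler comparison matrix $\mathcal{G}_\eps$ from block-wise log-norm and induced-norm estimates of $\Gamma_{\text{LTI},\eps}$, show that $\mathcal{G}_\eps$ is Hurwitz \emph{precisely} when $\eps<\eps^*_{\text{LTI}}$, and then transport Hurwitzness of the scalar comparison matrix back to contractivity of the full system.

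First I would record the four block estimates. On the diagonal, the choice of norms gives $\mu_x(A-BD^{-1}C)=-c_f$, while subadditivity and positive homogeneity of the log-norm yield $\mu_z(\eps^{-1}D+D^{-1}CB)\le \eps^{-1}\mu_z(D)+\|D^{-1}CB\|_z=-\eps^{-1}c_g+\|D^{-1}CB\|_z$. For the coupling blocks I would use the cross-induced norms $\|B\|_{z\to x}$ and $\|D^{-1}C(A-BD^{-1}C)\|_{x\to z}$. Assembling these produces
\begin{align*}
\mathcal{G}_\eps := \begin{bmatrix} -c_f & \|B\|_{z\to x} \\ \|D^{-1}C(A-BD^{-1}C)\|_{x\to z} & -\eps^{-1}c_g + \|D^{-1}CB\|_z \end{bmatrix},
\end{align*}
with $c_f=|\mu_x(A-BD^{-1}C)|$ and $c_g=|\mu_z(D)|$.

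Next I would characterize Hurwitzness of $\mathcal{G}_\eps$. Being $2\times 2$ and Metzler, it is Hurwitz iff its trace is negative and its determinant positive. The determinant condition reads $c_f\big(\eps^{-1}c_g-\|D^{-1}CB\|_z\big) > \|B\|_{z\to x}\,\|D^{-1}C(A-BD^{-1}C)\|_{x\to z}$, and solving this inequality for $\eps$ returns exactly the threshold $\eps<\eps^*_{\text{LTI}}$ stated in the theorem; the same bound forces the $(2,2)$ entry to be negative, so the trace condition follows automatically. With $\mathcal{G}_\eps$ Hurwitz, Metzler, and irreducible (its off-diagonal entries are strictly positive in the coupled case $B\neq 0$), Perron--Frobenius supplies positive right and left dominant eigenvectors; these define the weights $N_{\text{LTI}}\in\R_{>0}^2$, just as in the construction of~\cite[Lem.~2.31]{FB:24-CTDS}. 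Applying~\cite[Thm.~3.23]{FB:24-CTDS} then certifies that the shifted system contracts in the weighted composite norm $\|\cdot\|_{N_{\text{LTI}}}$ at the rate given by the magnitude of the spectral abscissa of the comparison matrix.

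I expect two points to require the most care. The first is the $(2,2)$ log-norm estimate: the $\eps^{-1}D$ term is the dominant stabilizing contribution, and the $\eps$-dependence must be handled cleanly so that the determinant inequality collapses to \emph{precisely} $\eps^*_{\text{LTI}}$ rather than to a looser sufficient bound. The second is the bookkeeping between the scalar comparison matrix and the block Jacobian $\Gamma_{\text{LTI},\eps}$ displayed in the statement: the weighted norm $\|\cdot\|_{N_{\text{LTI}}}$ and the reported rate arise from the $2\times 2$ majorant, so I would make explicit that the abscissa is read through this majorant, which is assembled from the log-norms and induced norms of the blocks of $\Gamma_{\text{LTI},\eps}$. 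Finally, I would note where the weakening relative to a direct use of the Network Contraction Theorem comes from: working in the shifted coordinates requires only $D$ and $A-BD^{-1}C$ to be Hurwitz (fast block and reduced model), and in particular does \emph{not} require the slow self-matrix $A$ of the original system~\eqref{eq:LTI-dynamics} to be contracting, which the direct network argument on the original coordinates would demand.
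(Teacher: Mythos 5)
Your proposal follows essentially the same route as the paper: it assembles the same $2\times 2$ Metzler majorant from $\mu_x(A-BD^{-1}C)$, $\eps^{-1}\mu_z(D)+\|D^{-1}CB\|_z$, and the cross-induced norms $\|B\|_{z\to x}$, $\|D^{-1}C(A-BD^{-1}C)\|_{x\to z}$, obtains $\eps^*_{\text{LTI}}$ from the determinant (Hurwitzness) condition exactly as in the paper's Lemma on Hurwitz gain matrices and its appendix computation of $\eps_2^*$, and then invokes \cite[Thm.~3.23]{FB:24-CTDS} with the Perron left/right eigenvectors to define $N_{\text{LTI}}$ and read off the rate $|\alpha(\Gamma_{\text{LTI},\eps})|$. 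The argument is correct, and your closing observation that only $D$ and $A-BD^{-1}C$ (not $A$) need be Hurwitz matches the paper's Remark on the Network Contraction Theorem.
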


\begin{proof}
\vspace{.1cm}

\noindent
    \textit{(a) Apply Lemma~\ref{lem:hurwitz-gain-matrix} to $\Gamma_{\text{LTI},\eps}$.} % As a next step, w
For this step, we use Lemma~\ref{lem:hurwitz-gain-matrix} (provided in the Appendix) to find sufficient conditions on $\eps$ such that  $\Gamma_{\text{LTI},\eps}$ is Hurwitz.  

\vspace{.1cm}

\noindent
\textit{(b) Obtain Contraction Rate.} When  $\Gamma_{\text{LTI},\eps}$ is Hurwitz, we can apply~\cite[Th. 3.23]{FB:24-CTDS} to find the contraction rate. 
First, 
consider the $\ell_2$-norm. 
Define $(v_1, v_2), (w_1, w_2) \in \R^2_{>0}$ to be the right and left dominant eigenvectors of $\Gamma_{\text{LTI},\eps}$, 
and let
$N_{\text{LTI}} := (N_1, N_2) := \Big( \frac{w_1^{1/2}}{v_1^{1/2}}, \frac{w_2^{1/2}}{v_2^{1/2}} \Big) \in \R^2_{>0}$ so that $\bar N := \operatorname{diag}(N_1,N_2) \in \R^{2 \times 2}.$ 
Note that since $\Gamma_{\text{LTI},\eps} \in \R^{2 \times 2}$ is Hurwitz and Metzler with off-diagonal etries strictly positive, $\Gamma_\eps$ is irreducible.
By application of~\cite[Lem. 2.31]{FB:24-CTDS}, a logarithmically optimal norm for $\Gamma_{\text{LTI},\eps}$ is $\|\cdot\|_{2, \bar N^{1/2}}$. 
Using $N_{\text{LTI}} := (N_1, N_2)$, we then define the weighted $\ell_2$ composite norm $\|\cdot\|_N$ for the concatenated state $(x,z) \in \R^{n_x + n_z}$ by, 
\[\|(x, z)\|_N^2 = N_1 \|x\|_z^2 + N_2 \|z\|_z^2.\]
Finally, by~\cite[Th. 3.23]{FB:24-CTDS},  
the shifted system~\eqref{eq:gen-new-var-w} is strongly infinitesimally contracting with respect to $\|\cdot\|_{N}$ with rate $\lvert \alpha(\Gamma_{\text{LTI},\eps}) \lvert$.
\end{proof}

The following remark emphasizes differences relative to classical results for the interconnection of LTI systems.

\begin{remark}[Comparison with 
 Network Contraction Theorem]\label{rem:interconnection-theorem} A classical
  approach for the analysis of the two-time scale LTI system~\eqref{eq:LTI-dynamics} relies on
  the Interconnection
  Theorem~\cite{GR-MDB-EDS:13},\cite[Thm. 3.23]{FB:24-CTDS}. The
  Network Contraction Theorem is based on the assumptions $\mu(A) < 0, \mu(D) <
  0$ and $\mu(A) \mu(D) > \|B\|_{z \to x}\|C\|_{x \to z}$. Although the Network Contraction Theorem
  does not require a parameter $\eps$ to satisfy given bounds, assuming
  that the matrices $A$ and $D$ are Hurwitz is a stronger requirement than
  our assumptions in Theorem~\ref{thm:str-contractive-whole-system-LTI}. In
  fact, defining the block matrix associated with the two-time scale LTI system \eqref{eq:LTI-dynamics}
  as,
 \begin{align}
   \mcA_\eps := \begin{bmatrix} \eps A & \eps B \\ C & D
   \end{bmatrix}, \label{eq:blockmatrix}
 \end{align}
the following implications hold: 
\begin{equation} \label{diagram:scaled-block}
   \begin{tikzcd}
     \parbox[t]{.4\textwidth}
            {\centering $\lognorm{A}<0$, $\lognorm{D}<0$, 
       and $\lognorm{A}\lognorm{D}> \norm{B}_{z \to x} \norm{C}_{x \to z}$ }
     \arrow[d,Rightarrow]  \arrow[r,Rightarrow, shorten <= 5pt, shorten >= 5pt] & 
           \parbox[t]{.35\textwidth}{\centering \(\mcA_\eps\) is Hurwitz  \(\forall~\eps>0\) \arrow[d,Rightarrow]}
           \\
             \parbox[t]{.4\textwidth}{\centering \(D\)  and \( A - B D^{-1} C \) are Hurwitz }
             \arrow[r,Rightarrow, shorten <= 3pt, shorten >= 3pt]
             & \parbox[t]{.5\textwidth}{\centering \(\exists ~  \eps_\text{LTI}^* > 0 \) s.t. \(\mcA_\eps\) is Hurwitz \(\forall ~ \eps<\eps_\text{LTI}^*\)} 
   \end{tikzcd}  
 \end{equation}
where $\eps^*$ is precisely given in
Theorem~\ref{thm:str-contractive-whole-system-LTI}. The proof of
diagram~\eqref{diagram:scaled-block} is provided in the Appendix. Our
approach is well-suited for applications where the matrix $A$ is not
Hurwitz, including our examples in
Section~\ref{sec:extensions-and-applications}.  \hfill $\triangle$
\end{remark}

\section{Conclusion}
\label{sec:conclusions}

In this work, we proposed a novel method to systematically analyze
singularly perturbed systems via contraction theory and derived new
equilibrium tracking bounds for systems within the OFO framework. By
assuming that (i) the fast subsystem and the reduced model are strongly
infinitesimally contractive, (ii) the interconnections are Lipschitz, and
(iii) the time-scale inducing parameter $\epsilon$ is suitably bounded, we
guarantee that the solutions of a two-time scale system approach the
solutions of its reduced model, as shown in
Theorem~\ref{thm:closeness-of-solns-general}. This is achieved by
explicitly computing bounds on their differences.  The bounds on $\epsilon$
are directly related to the Lipschitz constants and the properties of the
disturbances. Our results leverage assumptions that are weaker than those
used in prior works exploring the use of contraction theory, therefore
making them applicable to a wider array of systems.  We demonstrated that
our contraction-theoretic method is applicable to the OFO setup without
altering the typical assumptions imposed in traditional OFO
settings. Specifically, we showed that these common assumptions not only
ensure exponential stability but also contractivity.  Finally, we obtained
novel results for two-time scale LTI systems using the analysis techniques
of our proposed contraction-theoretic method.

In the future, we plan to investigate (i) weakly contracting dynamical
systems, (ii) contraction-theoretic methods for results for two-time scale
systems modeled by stochastic differential equations, and (iii) the
application of our methods to the analysis and control of neural networks.

\section*{Appendix}

\subsection{Auxiliary results for Hurwitz matrices}

\begin{lemma}
\label{lem:hurwitz-gain-matrix}
Let $a_{11},a_{12},a_{21},a_{22},d_{22} > 0$ and $d_{11},d_{21} \geq 0$. Let $\eps > 0$, and define the matrix, 
\begin{align}\label{eq:template-gain-matrix}
    \mathcal{G}_\eps = \begin{bmatrix}
        -a_{11} + d_{11} \eps & a_{12} \\
        a_{21} + d_{21} & -\eps^{-1} a_{22} + d_{22}
    \end{bmatrix}.
\end{align}
The if $\eps$ satisfying 
\begin{align}\label{eq:eps-conditions}
    0 < \eps < \min \left\{ \frac{a_{11}}{d_{11}}, \frac{a_{22}}{d_{22}}, \frac{a_{11} a_{22}}{a_{12}(a_{21} + d_{21} + d_{22})}  \right\}, 
\end{align}
then the matrix $\mathcal{G}_\eps$ is Hurwitz. 
\end{lemma}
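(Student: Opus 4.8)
The plan is to use the elementary fact that a real $2\times 2$ matrix is Hurwitz if and only if its trace is negative and its determinant is positive. Since the off-diagonal entries $a_{12}>0$ and $a_{21}+d_{21}\geq 0$ are nonnegative, $\mathcal{G}_\eps$ is Metzler, so this trace/determinant criterion (equivalently, that $-\mathcal{G}_\eps$ is a nonsingular M-matrix) is the natural tool. I would therefore reduce the entire claim to verifying two scalar inequalities under the hypotheses~\eqref{eq:eps-conditions}.

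For the trace, I would compute $\operatorname{tr}(\mathcal{G}_\eps) = (-a_{11}+d_{11}\eps) + (d_{22}-\eps^{-1}a_{22})$. The bound $\eps < a_{11}/d_{11}$ (vacuous when $d_{11}=0$) gives $d_{11}\eps < a_{11}$, so the $(1,1)$ entry is negative, and $\eps < a_{22}/d_{22}$ gives $\eps d_{22} < a_{22}$, i.e. $d_{22} < \eps^{-1}a_{22}$, so the $(2,2)$ entry is negative. Being a sum of two negative numbers, the trace is negative. This step is routine.

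For the determinant — the main obstacle — I would write $\det\mathcal{G}_\eps = (-a_{11}+d_{11}\eps)(d_{22}-\eps^{-1}a_{22}) - a_{12}(a_{21}+d_{21})$ and, since $\eps>0$, study the sign of $\eps\,\det\mathcal{G}_\eps$. The key algebraic simplification is the factorization
\begin{align*}
\eps\,\det\mathcal{G}_\eps = (a_{11}-\eps d_{11})(a_{22}-\eps d_{22}) - \eps\, a_{12}(a_{21}+d_{21}),
\end{align*}
where both factors of the first product are positive by the trace step. Expanding and discarding the nonnegative term $\eps^2 d_{11}d_{22}$ gives the lower bound
\begin{align*}
\eps\,\det\mathcal{G}_\eps \geq a_{11}a_{22} - \eps\big(a_{11}d_{22} + d_{11}a_{22} + a_{12}(a_{21}+d_{21})\big),
\end{align*}
so positivity of the determinant reduces to a single inequality that is linear in $\eps$.

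The hard part is confirming that the third entry of the minimum in~\eqref{eq:eps-conditions} indeed dominates the full bracket $a_{11}d_{22}+d_{11}a_{22}+a_{12}(a_{21}+d_{21})$ appearing above, since this is exactly what forces $\eps\,\det\mathcal{G}_\eps>0$ and closes the argument. I would track each of the three cross terms $a_{11}d_{22}$, $d_{11}a_{22}$, and $a_{12}(a_{21}+d_{21})$ separately to make sure the stated threshold controls their sum; I expect this bookkeeping — rather than any conceptual subtlety — to be the crux, and it is the place where one must check that the bound on $\eps$ is tight enough to handle all the coupling terms simultaneously.
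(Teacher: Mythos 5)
Your strategy (trace $<0$ via negativity of both diagonal entries, plus $\det\mathcal{G}_\eps>0$) is exactly the route the paper takes, and your algebra up to the lower bound
\begin{align*}
\eps\,\det\mathcal{G}_\eps \;\geq\; a_{11}a_{22} - \eps\bigl(a_{11}d_{22} + a_{22}d_{11} + a_{12}(a_{21}+d_{21})\bigr)
\end{align*}
is correct. The genuine gap is precisely the step you defer as ``bookkeeping'': the third entry of the minimum in~\eqref{eq:eps-conditions} does \emph{not} dominate the bracket above. The denominator your bound requires is $a_{11}d_{22}+a_{22}d_{11}+a_{12}(a_{21}+d_{21})$, whereas the stated one is $a_{12}(a_{21}+d_{21}+d_{22})$; these differ by $a_{11}d_{22}+a_{22}d_{11}$ versus $a_{12}d_{22}$, and the former can be strictly larger (e.g.\ whenever $a_{11}>a_{12}$ and $d_{11}=0$). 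The step cannot be closed because the lemma as stated is false: take $a_{11}=10$, $a_{12}=a_{21}=a_{22}=d_{22}=1$, $d_{11}=d_{21}=0$, $\eps=0.95$. Then condition~\eqref{eq:eps-conditions} reads $\eps<\min\{+\infty,\,1,\,5\}=1$ and is satisfied, both diagonal entries of $\mathcal{G}_\eps$ are negative, yet $\det\mathcal{G}_\eps = (-10)(-1/0.95+1) - 1 \approx -0.47 < 0$, so $\mathcal{G}_\eps$ has one positive eigenvalue and is not Hurwitz.

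Your own computation already supplies the correct replacement threshold, namely $\eps < a_{11}a_{22}/\bigl(a_{11}d_{22}+a_{22}d_{11}+a_{12}(a_{21}+d_{21})\bigr)$ together with $\eps<a_{11}/d_{11}$ and $\eps<a_{22}/d_{22}$ for the diagonal signs; the honest conclusion of your argument is therefore a corrected lemma rather than the stated one. For comparison, the paper's proof stumbles at the same spot in two ways: it eliminates the $d_{11}$ terms by substituting the \emph{upper} bound $\eps d_{11}d_{22}\le a_{22}d_{11}$ into the left-hand side of the inequality it is trying to establish (which enlarges the quantity whose positivity is at stake, i.e.\ weakens the claim in the wrong direction), and its final display replaces $a_{11}d_{22}$ by $a_{12}d_{22}$, which is how the erroneous denominator enters the statement. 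Your more careful handling of the determinant exposes both issues; you should finish by stating and proving the corrected bound rather than trying to make the stated one work.
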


\begin{proof}
    To ensure that $\mathcal{G}_\eps$ is Hurwitz, we require that the diagonal elements are all negative and that $\det(\mathcal{G}_\eps) > 0$. To satisfy the first requirement, 
    % we must ensure that:
    % \begin{align*}
    %     -a_{11} + d_{11} \eps < 0 \text{ and } -\eps^{-1}a_{22} + d_{22} < 0,
    % \end{align*}
    $\eps$ must be bounded as means that 
    \begin{align}
        \eps < \frac{a_{11}}{d_{11}} \text{ and } \eps < \frac{a_{22}}{d_{22}} \, .
    \end{align}
    This gives the first two entries of~\eqref{eq:eps-conditions}. 
    Next, we find conditions such that $\det(\mathcal{G}_\eps) < 0$. Accordingly, we have that: 
\begin{align*}
   %0 (-a_{11} + d_{11} \eps) (-\eps^{-1} a_{22} + d_{22}) &> (a_{21} +  d_{21}c_0) a_{12} \\
    \eps^{-1} a_{11} a_{22} - a_{22} d_{11} - a_{11} d_{22} + \eps d_{11} d_{22} > a_{21} a_{12} +  a_{12} d_{21}. 
\end{align*}
We already require that $\eps < a_{22} / d_{22}$. We use this to derive the following inequality
% \begin{align*}
%     \eps^{-1} a_{11} &a_{22} - a_{22} d_{11} - a_{11} d_{22} + \underbrace{\frac{a_{22}}{d_{22}}}_{\eps < \frac{a_{22}}{d_{22}}} d_{11} d_{22} > \\
%     &\eps^{-1} a_{11} a_{22} - a_{22} d_{11} - a_{11} d_{22} + \eps d_{11} d_{22}.
% \end{align*}
    \begin{align*}
        \eps^{-1} a_{11} a_{22} - a_{22} d_{11} - a_{11} d_{22} + \frac{a_{22}}{d_{22}} d_{11} d_{22} > a_{21} a_{12} +  a_{12} d_{21} 
    \end{align*}
and, simplifying, 
\begin{align}
    %\eps^{-1} a_{11}& a_{22} - a_{22} d_{11} - a_{11} d_{22} + \underbrace{\frac{a_{22}}{d_{22}} d_{11} d_{22}}_{a_{22} d_{11}} \\
    %    & ~~~~~~~~~~~~~~~~~~~~~~~~~~~~ > a_{21} a_{12} +  a_{12} d_{21}\\
    %\implies &
    \eps^{-1} a_{11} a_{22} - a_{11} d_{22} > a_{21} a_{12} +  a_{12} d_{21}. 
\end{align}
We therefore obtain the following bound for  $\eps$: 
\begin{align*}
    %\eps^{-1} a_{11} a_{22} &- a_{11} d_{22} > a_{21} a_{12} + a_{12} d_{21} c_0 \\
    %\implies 
    \eps < \frac{a_{11} a_{22}}{a_{21} a_{12} + a_{12} d_{21}  + a_{12} d_{22}}.
\end{align*}
which gives the third condition in~\eqref{eq:eps-conditions}.

\end{proof}

%% NOTE: Latex threw warnings when we had the old subsection title; can keep, but it doesn't like the \eqref{} in the subsection{}.
% \subsection{Proof of the properties~\eqref{diagram:scaled-block}}
\subsection{Proof of Commutative Diagram}
\label{sec:appendix-FB-exercise}
 
Here, we prove the four properties in~\eqref{diagram:scaled-block} 
for the block matrix~\eqref{eq:blockmatrix}. 
To this end, it is convenient to denote the four properties in the
diagram~\eqref{diagram:scaled-block} by $P(i,j)$, for $i,j\in\{1,2\}$.

% \vspace{.1cm}
 
\noindent $P(1,2) \implies P(2,2)$.  This implication obvious. 

% \vspace{.1cm}

\noindent $P(1,1) \implies P(1,2)$ This follows 
 from~\cite[Thm. 2.13]{FB:24-CTDS} and~\cite[Corollary 2.43]{FB:24-CTDS}.

\vspace{.1cm}

\noindent $P(1,1) \implies P(2,1)$. Clearly $D$ is
 Hurwitz and it suffices to show that $A-BD^{-1}C$ is Hurwitz. We compute
 \begin{align*}
   \lognorm{A-BD^{-1}C} &\leq 
   \lognorm{A} +
   \lognorm{-BD^{-1}C}
   \leq
   \lognorm{A}+\norm{BD^{-1}C} \\
    & \leq
   \lognorm{A} +  \norm{B} \norm{D^{-1}} \norm{C} 
   \leq
   \lognorm{A} -  \frac{ \norm{B}
     \norm{C} }{ \lognorm{D} }  \\  
   &= - \frac{1}{|\lognorm{D}|} \Big(
   |\lognorm{A}|  |\lognorm{D}| - \norm{B}  \norm{C}
   \Big) <0,
 \end{align*}
 where the last inequality follows from the assumption. Therefore,
 $A-BD^{-1}C$ is Hurwitz.

\noindent $P(2,1) \implies P(2,2)$. We first prove that
 $\mcA_\eps$ is similar to
 \begin{equation}
 \begin{split}
   \mcB_\eps &= \begin{bmatrix}
   A - B D^{-1} C & B\\
   D^{-1}C (A - B D^{-1} C) & \epsilon^{-1} D + D^{-1} C B
   \end{bmatrix}  \\
   &= \eps^{-1}
   \begin{bmatrix} I_n & 0 \\ D^{-1}C & I_m \end{bmatrix}
   \mcA_\eps
   \begin{bmatrix} I_n & 0 \\ - D^{-1}C & I_m \end{bmatrix},
 \end{split}
 \end{equation}
 by noting that 
 $$ 
 \begin{bmatrix} I_n & 0 \\ D^{-1}C &
   I_m \end{bmatrix} \begin{bmatrix} I_n & 0 \\ - D^{-1}C &
   I_m \end{bmatrix}=I_{n+m}.
   $$  
   Finally, an application of\cite[Thm. 2.13]{FB:24-CTDS} implies the Hurwitzness of $\mcB_\eps$. Specifically, given norms on $\R^n$ and
 $\R^m$ such that $\lognorm{A-BD^{-1}C}<0$ and $\lognorm{D}<0$,
 the matrix $\mcB_\eps$ is Hurwitz if
 \begin{gather}
 \begin{split}
   \label{eq:messy-composite-singular}
   |\lognorm{A-BD^{-1}C}| \cdot & |\lognorm{\eps^{-1}D+D^{-1} C B}| > \norm{B}  \norm{D^{-1}C (A - B D^{-1} C)}.
   \end{split}
 \end{gather}  
 For $\eps<\eps^*_1:=|\lognorm{D}|/\norm{D^{-1}CB}$, we note that
 $\eps^{-1}>\norm{D^{-1}CB}/|\lognorm{D}|$ so that we obtain
 \begin{equation*}
   \lognorm{\eps^{-1}D+D^{-1}CB} \leq
   \eps^{-1}\lognorm{D}+\norm{D^{-1}CB}<0.
 \end{equation*}
 Therefore, we note that $$|\lognorm{\eps^{-1}D+D^{-1}CB}| \geq
 \eps^{-1}|\lognorm{D}|-\norm{D^{-1}CB}$$ and that a sufficient condition
 for the inequality~\eqref{eq:messy-composite-singular} is
 \begin{align*}
   \label{eq:messy-composite-singular:2}
    |\lognorm{A-B & D^{-1}C}| \cdot \big(\eps^{-1}|\lognorm{D}|-\norm{D^{-1}CB}\big)
   > \norm{B}  \norm{D^{-1}C (A - B D^{-1} C)}
   \\
   &\iff
   \eps^{-1}|\lognorm{D}|
>
   \frac{\norm{B} \norm{D^{-1}C (A - B D^{-1} C)}}
        {|\lognorm{A-BD^{-1}C}|}  +\norm{D^{-1}CB}
   \\
   &\iff
   \eps
   < \eps_2^* := 
   |\lognorm{D}|
   \Big(\frac{\norm{B} \norm{D^{-1}C (A - B D^{-1} C)}}
        {|\lognorm{A-BD^{-1}C}|} +\norm{D^{-1}CB}\Big)^{-1}.
 \end{align*}
It is easy to
 see that $\eps_2^*<\eps_1^*$. This completes the proof of the four implications. 

\bibliographystyle{siamplain}
%% \bibliography{alias,biblio}
\bibliography{alias,Main,FB}

\end{document}